\DeclareFontFamily{U}{mathx}{\hyphenchar\font45}
\DeclareFontShape{U}{mathx}{m}{n}{
      <5> <6> <7> <8> <9> <10>
      <10.95> <12> <14.4> <17.28> <20.74> <24.88>
      mathx10
      }{}
\DeclareSymbolFont{mathx}{U}{mathx}{m}{n}
\DeclareMathAccent{\widecheck}{0}{mathx}{"71}
\DeclareMathAlphabet{\pazocal}{OMS}{zplm}{m}{n}  
\newtheorem*{rep@theorem}{\rep@title}
\newcommand{\newreptheorem}[2]{%
\newenvironment{rep#1}[1]{%
 \def\rep@title{#2 \ref{##1}}%
 \begin{rep@theorem}}%
 {\end{rep@theorem}}}
\theoremstyle{plain}
\newtheorem{theorem}{Theorem}
\newtheorem{lemma}{Lemma}
\newtheorem{corollary}{Corollary}
\newtheorem{proposition}{Proposition}
\theoremstyle{definition}
\newtheorem{definition}{Definition}
\newtheorem{example}{Example}
\def\orcidID#1{\href{http://orcid.org/#1}{\protect\raisebox{-1.25pt}{\protect\includegraphics[height=8pt]{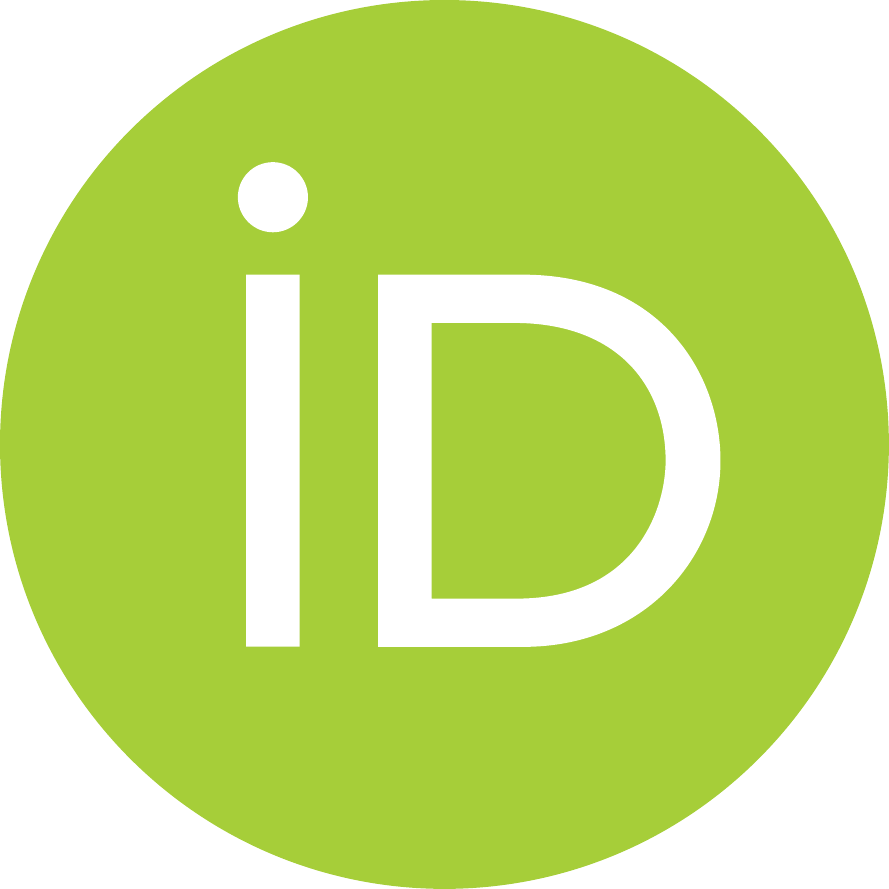}}}}
\tikzstyle{every picture}+=[remember picture, overlay, anchor=center]
\tikzstyle{every node}=[initial text=]
\tikzstyle{every picture}=[>=stealth]
\tikzstyle{BE}=[ellipse, draw, fill=normalfill, minimum width=7mm, inner sep=1]
\Crefname{figure}{Fig.}{Figs.}
\crefname{figure}{figure}{figures}
\Crefname{tabular}{Table}{Tables}
\Crefname{algorithm}{Algo.}{Algos.}
\crefname{algorithm}{algorithm}{algorithms}
\Crefname{definition}{Def.}{Defs.}
\crefname{definition}{definition}{definitions}
\Crefname{theorem}{Theo.}{Theorems}
\crefname{theorem}{theorem}{theorems}
\Crefname{lemma}{Lemma}{Lemmas}
\Crefname{corollary}{Cor.}{Corollaries}
\crefname{corollary}{corollary}{corollaries}
\Crefname{proposition}{Prop.}{Propositions}
\crefname{proposition}{proposition}{propositions}
\Crefname{example}{Example}{Examples}
\Crefname{section}{Sec.}{Secs.}
\crefname{section}{section}{sections}
\Crefname{equation}{eq.}{eqs.}
\crefname{equation}{equation}{equations}
\renewcommand{\paragraph}{\@startsection{paragraph}{5}{0em}%
  {.7ex plus .2ex minus .1ex}%
  {-.5em}%
  {\bfseries}}
\newcommand{\hlbox}[1]{%
  \smallskip\begin{center}
  \fboxrule1pt\fboxsep3pt\fcolorbox{black!45}{black!8}{%
  \begin{minipage}{.96\linewidth}#1\end{minipage}}
  \end{center}\smallskip}
\def\THICKhrulefill{\leavevmode \leaders \hrule height 5pt\hfill \kern \z@}
\def\getfirst#1#2\relax{\tctestifnum{\count@stringtoks{#1}>1}{ERROR}{#1}}
\newcommand{\colorpar}[3]{\colorbox{#1}{\parbox{#2}{#3}}}
\newcommand{\marginremark}[3]{%
  \ifthenelse{\boolean{tosubmit}}{}{
	\marginnote{\colorpar{#2}{1.2\linewidth}{\raggedright\color{#1}#3}}
}}
\newcommand{\highlightedremark}[4]{%
  \ifthenelse{\boolean{tosubmit}}{}{
	\begin{center}\fcolorbox{#1}{#2}{%
	\begin{minipage}{.98\linewidth}\color{#1}%
	\textbf{\THICKhrulefill[ #3 ]\THICKhrulefill}%
	\par\noindent#4\end{minipage}}\end{center}%
}}
\newcommand{\hey}[4]{%
  \ifthenelse{\boolean{tosubmit}}{}{
  \leavevmode\marginnote{\sffamily\Large\color{#1}@\getfirst#3\relax\relax}
  \colorbox{#2}{\sffamily\bfseries{@#3:}}~{\sffamily\color{#1}#4}}}
\newcommand{\todo}[1]{%
  \ifthenelse{\boolean{tosubmit}}{}{
  \noindent\textsf{\color{Red}\textbf{TODO:} #1}%
  \begingroup\renewcommand*{\marginnotevadjust}{-3ex}%
  \marginnote{\textsf{\color{red}\bfseries TODO}}\endgroup}}
\newcommand{\tocite}[1][??]{%
  \ifthenelse{\boolean{tosubmit}}{}{
  \noindent\textbf{\sffamily\textcolor{blue!85}{[#1]}}%
  \begingroup\renewcommand*{\marginnotevadjust}{-3ex}%
  \marginnote{\textsf{\color{blue}\bfseries CITE!}}\endgroup}}
\colorlet{MS-fg}{WildStrawberry}
\colorlet{MS-bg}{Plum!6}
\colorlet{CEB-fg}{TealBlue!75!green!75!black}
\colorlet{CEB-bg}{Aquamarine!8}
\newcommand{\rmkCEB}[1]{\marginremark{CEB-fg}{CEB-bg}{\tiny\sffamily[CEB]\ #1}}
\newcommand{\hrmkCEB}[1]{\highlightedremark{CEB-fg}{CEB-bg}{CEB}{#1}}
\let\oldtop\top
\let\oldbot\bot
\renewcommand{\top}{\mathtt{{1}}}
\renewcommand{\bot}{\mathtt{{0}}}
\renewcommand{\emptyset}{\varnothing}
\renewcommand{\vec}{\mathaccent "017E\relax}  
\renewcommand{\restriction}{\mathord{\upharpoonright}} 
\newcommand\restr[2]{{
  \left.\kern-\nulldelimiterspace #1 \vphantom{\big|} \right|_{#2}}}
\newcommand{\BB}{\ensuremath{\mathbb{B}}\xspace}  
\newcommand{\NN}{\ensuremath{\mathbb{N}}\xspace}  
\newcommand{\QQ}{\ensuremath{\mathbb{Q}}\xspace}  
\newcommand{\RR}{\ensuremath{\mathbb{R}}\xspace}  
\newcommand{\from}{\colon}
\newcommand{\minb}{\mathbin{\mathrm{min}}}
\newcommand{\maxb}{\mathbin{\mathrm{max}}}
\newcommand{\bfx}{\ensuremath{\boldsymbol{x}}\xspace}
\newcommand{\card}[1]{\ensuremath{\vert{#1}\vert}\xspace}
\newcommand{\bigO}[1]{\ensuremath{O{\left(#1\right)}}\xspace}
\newcommand{\overbar}[1]{\mkern 1.8mu\overline{\mkern-1.8mu#1\mkern-2.4mu}\mkern 2.4mu}
\newcommand{\acronym}[1]{\ensuremath{\text{\larger\scshape#1}}\xspace}
\newcommand{\mathobject}[1]{\ensuremath{\text{\scalebox{.92}{$#1$}}}}
\newcommand{\bfred}[1]{\textcolor{Red}{\bfseries{#1}}\xspace}
\newcommand{\Vars}{\ensuremath{\mathit{Vars}}\xspace}
\DeclareMathOperator{\bddOp}{\mathobject{B}}
\DeclareMathOperator{\BDDlab}{\mathit{Lab}}
\DeclareMathOperator{\low}{\mathit{Low}}
\DeclareMathOperator{\high}{\mathit{High}}
\newcommand{\Root}{\mathobject{R}}
\newcommand{\BDDnodes}{\mathobject{W}\xspace}
\newcommand{\BDDnodesN}{\ensuremath{\BDDnodes_{\mkern-5mu n \mkern 1mu}}\xspace}
\newcommand{\BDDnodesT}{\ensuremath{\BDDnodes_{\mkern-4mu t \mkern 1mu}}\xspace}
\newcommand{\BDDroot}[1][\bddOp]{\ensuremath{\Root_{#1}}\xspace}
\newcommand{\precneq}{\mathrel{\text{\prec@eq}}}
\newcommand{\prec@eq}{%
  \oalign{%
    \hidewidth$\m@th\prec$\hidewidth\cr
    \noalign{\nointerlineskip\kern1ex}%
    $\m@th\smash{\raisebox{0.65ex}{\rotatebox{90}{%
      \scalebox{1.1}[-1.1]{$\nshortmid$}}}}$\cr
    \noalign{\nointerlineskip\kern-.5ex}%
}}
\newcommand{\LINTIME}{\acronym{lintime}}     
\newcommand{\EXPTIME}{\acronym{exptime}}     
\newcommand{\PSPACE}{\acronym{pspace}}
\newcommand{\AT}{\acronym{at}}               
\newcommand{\ATs}{\acronym{at}{s}\xspace}
\newcommand{\SAT}{\acronym{sat}}             
\newcommand{\SATs}{\acronym{sat}{s}\xspace}
\newcommand{\DAT}{\acronym{dat}}             
\newcommand{\DATs}{\acronym{dat}{s}\xspace}
\newcommand{\DAG}{\acronym{dag}}             
\newcommand{\DAGs}{\acronym{dag}{s}\xspace}
\newcommand{\BDD}{\acronym{bdd}}             
\newcommand{\BDDs}{\acronym{bdd}{s}\xspace}
\DeclareMathOperator{\pos}{\mathit{pos}}
\newcommand{\BU}[1][]{\ensuremath{\mathtt{BU%
  \ifthenelse{\isempty{#1}}{}{_{\mkern1mu#1}}}}\xspace}
\newcommand{\BUSAT}{\BU[SAT]}
\newcommand{\BUDAT}{\BU[DAT]}
\newcommand{\BUBDD}{\ensuremath{\mathtt{BDD_{DAG}}}\xspace} 
\DeclareMathOperator{\kshortest}{\mathtt{shortest\_paths}}
\DeclareMathOperator{\iswellformed}{\mathtt{is\_well\_formed}}
\newcommand{\nodeType}[1]{\ensuremath{\mathtt{#1}}\xspace}
\newcommand{\tBAS}{\nodeType{BAS}}
\newcommand{\tOR}{\nodeType{OR}}
\newcommand{\tAND}{\nodeType{AND}}
\newcommand{\tSAND}{\nodeType{SAND}}
\newcommand{\allATs}{\ensuremath{\mathcal{T}}\xspace}
\DeclareMathOperator{\typOp}{\mathit{t}}       
\DeclareMathOperator{\chOp}{\mathit{ch}}       
\newcommand{\type}[1]{\ensuremath{\typOp({#1})}\xspace}
\newcommand{\child}[1]{\ensuremath{\chOp({#1})}\xspace}
\DeclareMathOperator{\desc}{\BAS}
\DeclareMathOperator{\sfun}{\mathit{f}}        
\newcommand{\sfunT}[1][\T]{\ensuremath{\sfun_{\!#1}}\xspace}
\newcommand{\sem}[1]{\ensuremath{\llbracket{#1}\rrbracket}}  
\newcommand{\ssem}[1]{\sem{#1}\xspace}                       
\newcommand{\dsem}[1]{\ssem{#1}}                             
\DeclareMathOperator{\ogOp}{\mathobject{G}}                  
\newcommand{\ograph}[1]{\ensuremath{\ogOp_{#1}}\xspace}
\newcommand{\before}[1][]{\mathbin{%
  \ifthenelse{\isempty{#1}}{%
    \tikz[baseline=-.6ex]{\draw[->,thin,x=1ex,y=1ex](0,0)--(1.7,0);}%
  }{%
    \tikz[baseline=-.48ex]{\draw[->,thin,x=1ex,y=1ex](0,0)--(1.7,0);%
    \node[x=1ex,y=1ex,inner sep=0pt]at(.75,.75){$\scriptscriptstyle{#1}$};}}}
}
\newcommand{\attack}[1][A]{\mathobject{#1}\xspace}
\newcommand{\suite}[1][S]{\ensuremath{\text{\relsize{-.5}$\pazocal{#1}$}}\xspace}
\newcommand{\allAttacks}{\ensuremath{\mathcal{A}}\xspace}
\newcommand{\allSuites}{\ensuremath{\rotatebox[origin=c]{-15}{$\mathscr{S}$\!}}\xspace}
\newcommand{\amin}[1][\T]{\sem{#1}\xspace}
\newcommandtwoopt{\poset}[2][\attack][\prec]{%
  \ensuremath{\langle{#1},{#2}\rangle}\xspace}
\newcommandtwoopt{\Hasse}[2][\attack][\prec]{%
  \ensuremath{\mathobject{H}_{\mkern-2mu#1}^{#2}}\xspace}
\newcommand{\ccomp}{\mathobject{C}\xspace}
\DeclareMathOperator{\attrOp}{\alpha}                     
\DeclareMathOperator{\metrSOp}{
  \mkern-1.2mu\vec{\mkern1.2mu\attrOp\mkern-1.2mu}\mkern1.2mu}
\DeclareMathOperator{\metrAOp}{\widehat{\attrOp}}         
\DeclareMathOperator{\metrOp}{\widecheck{\attrOp}}        
\newcommand{\attr}[1]{\ensuremath{\attrOp(#1)}\xspace}    
\newcommand{\metrA}[1]{\ensuremath{\metrAOp(#1)}\xspace}  
\newcommand{\metr}[1]{\ensuremath{\metrOp(#1)}\xspace}    
\newcommand{\Vdom}{\mathobject{V}\xspace}
\newcommand{\domain}{\mathobject{D}\xspace}
\newcommand{\ndomain}{\ensuremath{\domain_{\mkern-1mu\star}}\xspace}
\newcommand{\operOR}{\mathbin{\triangledown}}
\newcommand{\operAND}{\mathbin{\vartriangle}}
\newcommand{\operSAND}{\mathbin{\vartriangleright}}
\DeclareMathOperator*{\bigoperOR}{\bigtriangledown}
\DeclareMathOperator*{\bigoperAND}{\bigtriangleup}
\DeclareMathOperator*{\bigoperSAND}{\mathbin{\text{\raisebox{-.7ex}{\rotatebox{90}{$\bigtriangledown$}}}}\vphantom{\bigtriangledown}}
\newcommand{\neutral}[1]{\ensuremath{1_{\!#1}}\xspace}
\newcommand{\ntOR}{\neutral{\operOR}}
\newcommand{\ntAND}{\neutral{\operAND}}
\newcommand{\ATnodes}{\mathobject{N}\xspace}
\newcommand{\ATroot}[1][\T]{\ensuremath{\Root_{#1}}\xspace}
\newcommand{\TLA}{\acronym{tla}}             
\newcommand{\OR}{\acronym{or}}               
\newcommand{\AND}{\acronym{and}}             
\newcommand{\SAND}{\acronym{sand}}           
\newcommand{\BAS}{\acronym{bas}}             
\newcommand{\BASs}{\acronym{bas}{s}\xspace}
\newcommand{\T}[1][]{\ensuremath{\mathobject{T}\ifthenelse{\isempty{#1}}{}{_{\hspace{-2pt}#1}}}\xspace}
\newcommand{\sampleTs}{\ensuremath{\T[\text{\larger[.5]{$s$}}]}\xspace}
\newcommand{\sampleTd}{\ensuremath{\T[\text{\larger[.5]{$d$}}]}\xspace}
\newcommand{\ff}{\ensuremath{\mathit{ff}}\xspace}
\newcommand{\ww}{\ensuremath{\mathit{w}}\xspace}
\newcommand{\cc}{\ensuremath{\mathit{cc}}\xspace}
\DeclareMathOperator{\logicformula}{\mathobject{L}}
\newcommand{\LT}[1][\T]{\ensuremath{\logicformula_{#1}}\xspace}
\newcommand{\bddT}[1][\T]{\ensuremath{\bddOp_{#1}}\xspace}
\newcommand{\bddf}{\bddT[\!f]}
\def\TITLE{Efficient Algorithms for\\ Quantitative Attack Tree Analysis}
\begin{document}
\title{%
	\TITLE
	\thanks{%
		We thank Sebastiaan Joosten for his help with the proof of
		\Cref{theo:NP_hard}; also Lars Kuijpers and Jarik Karsten for
		collaborations that led to our definition of ordering graphs and
		\Cref{alg:shortest_path_BDD} resp.
	This work was partially funded by NWO project~15474 (\emph{SEQUOIA}),
	and ERC Consolidator Grant 864075 (\emph{CAESAR}).%
	}
}
\author{
	\IEEEauthorblockN{%
		Carlos E.\ Budde\IEEEauthorrefmark{1}\orcidID{0000-0001-8807-1548}
		\qquad
		Mari\"elle Stoelinga\IEEEauthorrefmark{1}\IEEEauthorrefmark{2}\orcidID{0000-0001-6793-8165}
	}
	\IEEEauthorblockA{%
		\IEEEauthorrefmark{1}{University of Twente,
			Formal Methods and Tools, Enschede, the Netherlands.}\\
		\IEEEauthorrefmark{2}{Radboud University,
			Department of Software Science, Nijmegen, the Netherlands.}\\
		{\texttt{\small\{c.e.budde,m.i.a.stoelinga\}@utwente.nl}}
	}
}
\maketitle
\thispagestyle{plain}  
\pagestyle{plain}
\begin{abstract}
Numerous analysis methods for quantitative attack tree analysis have been proposed.
These algorithms compute relevant security metrics, i.e.\ performance indicators that quantify how good the security of a system is, such as the most likely attack, the cheapest, or the most damaging one.
This paper classifies attack trees in two dimensions: proper trees vs.\ directed acyclic graphs (i.e.\ with shared subtrees); and static vs.\ dynamic gates.
For each class, we propose novel algorithms that work over a generic attribute domain, encompassing a large number of concrete security metrics defined on the attack tree semantics.
We also analyse the computational complexity of our methods.
\end{abstract}


\section{Introduction}
\label{sec:intro}


Attack trees are a popular method in decision making for security, supporting the identification, documentation and analysis of cyberattacks.
They are part of many system engineering frameworks, e.g.\ \emph{UMLsec} \cite{RA15} and \emph{SysMLsec} \cite{AR13}, and are supported by industrial tools such as Isograph's \emph{AttackTree} \cite{IsographAT}. 

An attack tree (\AT) is a hierarchical diagram to systematically map potential  attack scenarios of a system, see \Cref{fig:AT_nodes,fig:AT}.
The root at the top of the diagram models the attacker's goal, which is further refined into subgoals by means of gates:
an \AND gate indicates that an attack is successful iff all children attacks succeed;
an \OR gate indicates that any single child suffices.
The leaves of the tree are basic attack steps (\BAS), which model indivisible actions such as cutting a wire.  

\paragraph{Static vs.\ dynamic attack trees}
Extensions of classic \ATs include the sequential-\AND gate (\SAND), indicating that subgoals must succeed in order from left to right \cite{JKM+15,AGKS15}.
\ATs without \SAND gates are called \emph{static}; those with \SAND{s} are called \emph{dynamic}.
A formal approach requires different semantics to these two categories, as we explain below.

\paragraph{Tree vs.\ \DAG attack trees}
Despite their name, \ATs are directed acyclic graphs (\DAGs) rather than trees, since subtrees can be shared by several parent nodes---see \Cref{fig:AT:example:dynamic}.
As elaborated below, \DAG-structured \ATs are computationally more challenging than those with a proper tree structure.

\paragraph{\AT metrics}
Besides learning the essential components and structure that constitute a feasible attack scenario, a vast number of algorithms have been developed to compute a wide range of \emph{security metrics}.
These metrics comprise key performance indicators (\acronym{kpi}{s}) that quantify relevant security features, such as the time, cost, and likelihood of different attack scenarios.
\acronym{kpi}{s} serve several purposes, e.g.\ allowing to compare different design alternatives w.r.t.\ the desired security features; compute the effectiveness of defensive measures; verify whether a solution meets its security requirements; etc.

\paragraph{\AT analysis}
%
Numerous algorithms have been proposed to compute security metrics. 
These include methods to compute the cost and probability of an attack \cite{BLP+06,JW08}, the time it takes \cite{KRS15,AHPS14,KSR+18}, as well as Pareto analyses that study trade offs between different attributes \cite{KRS15,FW19}.
Such algorithms exploit a wide plethora of techniques, for instance Petri nets \cite{DMCR06}, model checking \cite{AGKS15}, and Bayesian networks \cite{GIM15}.
While these algorithms provide good ways to compute metrics, they also suffer from several drawbacks:
\begin{enumerate*}[label=(\arabic*)]
\item	Many of them are geared to specific attributes, such as attack time
		or probability, while the procedure could extend to other metrics;
\item	Several algorithms do not exploit the acyclic structure of the \AT,
		specially approaches based on model checking;
\item	Since their application is mostly illustrated on small examples,
		it is unclear how these approaches scale to larger case studies.
\end{enumerate*}

\begin{figure}
	\hspace*{.7em}%
	\begin{minipage}[b]{.6\linewidth}
		\includegraphics[width=\linewidth]{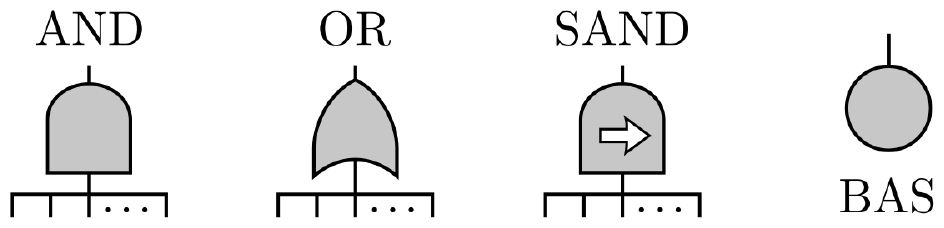}
	\end{minipage}
	\hfill
	\begin{minipage}[b]{.28\linewidth}
		\vfill~\\
		\captionof{figure}{\raggedright Nodes in an attack tree.}
		\label{fig:AT_nodes}
	\end{minipage}
	\vspace{-3ex}
\end{figure}

\hlbox{%
\paragraph{Approach}
We provide efficient and generic algorithms to compute \AT metrics,
by tailoring them to 
our 2-dimensional categorisation: 
static vs.\ dynamic \ATs, and proper trees vs.\ \DAG-structured \ATs.
These algorithms demand different semantics (that we provide) for dynamic attack trees.
}

Our algorithmic results are summarised in \Cref{tab:all_algos_intro}. An elaborate comparison with related work is provided in 
\Cref{sec:conclu}.


\begin{table*}
	\centering


\begingroup
\arrayrulecolor{gray}
\colorlet{rowcolA1}{white}
\colorlet{rowcolB1}{gray!28}
\colorlet{rowcolA2}{rowcolA1}
\colorlet{rowcolB2}{rowcolB1}
\rowcolors{2}{rowcolA1}{rowcolB1}
\extrarowheight=\aboverulesep
\addtolength{\extrarowheight}{\belowrulesep}
\aboverulesep=0pt
\belowrulesep=0pt
\def\metric#1{\parbox[c][5ex]{7.5em}{\bfseries #1}}
\def\algo#1#2{\parbox[c]{#1}{\centering{#2}}}
\def\OPEN{\textsf{\scshape\bfseries open~problem}}
\def\BU{\acronym{bu}}
\def\BUKordy{$\pazocal{C}$-\BU\xspace}
\def\head#1{\larger[.5]\sffamily\color{white}{#1}}
\begin{tabular}{>{~}l|>{}c|>{}c|>{}c>{}c|>{}c}
	\rowcolor{black!75}
		  \head{Metric}
		& \head{Static tree}
		& \head{Dynamic tree}
		& \multicolumn{2}{>{}c|}{\head{Static \DAG}}
		& \head{Dynamic \DAG}
	\\[.5ex]\midrule
	\metric{min cost}
		& \algo{8em}{\BU \cite{MO06,Wei91,Sch99}}
		& \algo{6em}{\BU \cite{JKM+15}}
		& \algo{7em}{\acronym{mtbdd} \cite{BET13}}
		& \algo{5em}{\BUKordy \cite{KW18}}
		& \algo{5em}{\acronym{pta} \cite{KRS15}}
	\\
	\metric{min time}
		& \algo{7em}{~\BU \cite{MO06,HAF+09}}
		& \algo{10em}{\acronym{aph} \cite{AHPS14} ~~ \BU \cite{JKM+15}}
		& \multicolumn{2}{>{}c|}{%
		  \algo{7em}{Petri nets \cite{DMCR06}}}
		& \algo{5em}{\acronym{pta} \cite{KRS15}}
	\\
	\metric{min skill}
		& \algo{7em}{~\BU \cite{MO06,BFM04}}
		& \algo{6em}{\BU \cite{JKM+15}}
		& \multicolumn{2}{>{}c|}{%
		  \algo{8em}{\BUKordy \cite{KW18}}}
		& ---
	\\
	\metric{max damage}
		& \algo{8em}{~\BU \cite{MO06,HAF+09,BFM04}}
		& \algo{6em}{\BU \cite{JKM+15}}
		& \algo{7em}{\acronym{mtbdd} \cite{BET13}}
		& \algo{5em}{\acronym{dpll} \cite{JW08}}
		& \algo{5em}{\acronym{pta} \cite{KRS15}}
	\\
	\metric{probability}
		& \algo{7em}{\BU \cite{BLP+06,HAF+09}}
		& \algo{7em}{\acronym{aph} \cite{AHPS14}}
		& \algo{7em}{\BDD \cite{Rau93}}
		& \algo{5em}{\acronym{dpll} \cite{JW08}}
		& \algo{7em}{\acronym{i/o-imc} \cite{AGKS15}}
	\\
	\metric{Pareto fronts}
		& \algo{7em}{\BU \cite{AN15,HAF+09}}
		& \algo{7.1em}{\OPEN}
		& \multicolumn{2}{>{}c|}{%
		  \algo{9em}{\BUKordy \cite{FW19}}}
		& \algo{6em}{\acronym{pta} \cite{KRS15}}
	\\
	\rowcolor{rowcolB2}
	\metric{Any of the above}
		& \algo{7em}{\bfseries\Cref{alg:bottom_up_SAT}:~$\mathtt{BU_{SAT}}$}
		& \algo{7em}{\bfseries\Cref{alg:bottom_up_DAT}:~$\mathtt{BU_{DAT}}$}
		& \multicolumn{2}{>{}c|}{%
		  \algo{9em}{\bfseries\Cref{alg:bottom_up_BDD}:~$\mathtt{BDD_{DAG}}$}}
		& \algo{7.1em}{\OPEN}
	\\
	\rowcolor{rowcolA2}
	\metric{$\boldsymbol{k}$-top metrics}
		& \algo{8em}{\BU-projection \cite{MO06}}
		& \algo{7.1em}{\OPEN}
		& \multicolumn{2}{>{}c|}{%
		  \algo{13.5em}{\bfseries\Cref{alg:shortest_path_BDD}: $\mathtt{BDD~shortest\_paths}$}}
		& \algo{7.1em}{\OPEN}
	\tikz[overlay]{\draw[black,very thick] (-15.1,-.25) rectangle ++(15.3,1.28);}
	\\
\end{tabular}
\endgroup

	\vspace{1ex}
	\caption{Efficient algorithms to compute security metrics on different
	         AT classes~~\textcolor{black!75}{(details in \Cref{sec:conclu})}}
	\label{tab:all_algos_intro}
	\vspace{-2ex}
\end{table*}

\paragraph{Static trees}
We start with the simplest category: static attack trees (\SATs) with proper tree structure.
As shown in a seminal paper by Mauw \& Oosdijk \cite{MO06}, metrics can be computed for tree-structured \SATs in a bottom-up fashion.
This algorithm propagates values from the leaves to the top, using appropriate operators $\operOR$ and $\operAND$ resp.\ for the \OR and \AND gates in the tree.
We show this as \Cref{alg:bottom_up_SAT}.
A key insight in \cite{MO06} is that this procedure works whenever the algebraic structure $(\Vdom,\operAND,\operOR)$ constitutes a semiring.
In particular, $\operAND$ must distribute over $\operOR$.
\\[.3ex]
We provide an alternative proof of correctness for this result: while \cite{MO06} deploys rewriting rules for attack trees, we work directly on the syntactic \AT structure.
Furthermore, we propose new classes of attribute domains, which extend the application of the bottom-up algorithm to compute popular security metrics, including stochastic and Pareto analyses.

\paragraph{Static \DAGs}
It is well-known that static attack trees with \DAG structure cannot be analysed via a bottom-up procedure \cite{Rau93,BK18}.
Several algorithms have been devised to tackle with such \ATs, mostly geared to specific metrics \cite{AGKS15,BET13,DMCR06,JW08,KW18}.
\\[.3ex]
\emph{A key contribution of this paper is a generic algorithm (\Cref{alg:bottom_up_BDD}) that works over any semiring attribute domain $(\Vdom,\operOR,\operAND)$, i.e.\ where $\operAND$ distributes over $\operOR$}.

Concretely, we exploit a binary decision diagram representation (\BDD) of the attack tree.
Our algorithm visits each \BDD node once and is thus linear in its size.
The caveat is that \BDDs can be of exponential size in the number of \BAS, but one cannot hope for faster algorithms: as we show, computing a minimal attack is an NP-hard problem.
Moreover, \BDDs are known to be compact in practice \cite{Bry86}, and allow parallel traversals \cite{ODP17}, making them an overall efficient choice.

\paragraph{Dynamic trees}
A challenge to compute metrics for dynamic attack trees (\DATs) is to define them formally based on their semantics.
Usually metrics are decoupled from semantics, and defined either on the syntactic \AT structure, or ad hoc for the selected computation method \cite{JKM+15,KRS15,ANP16,KW18}.
A main obstacle is to choose semantics for \DATs in a way that supports a proper definition of metric, i.e.\ that is compatible with the notion of metric of static \ATs, and that is generic as the attribute domains from \cite{MO06}.
In particular, the interaction among multiple \SAND gates is nontrivial, because they may impose conflicting execution orders on the \BAS of the tree.
\emph{One of our key contributions is to define a notion of well-formedness that rules out conflicting requirements}.

We give semantics to well-formed \DATs in terms of partially ordered sets (\emph{posets}).
Each poset \poset represents an attack scenario, where \attack collects all attacks steps to be performed, and $a\prec b$ indicates that step $a$ must be completed before step $b$ starts.
%
%
%
This set up enables us to define a notion of metric for \DATs based on their semantics.
\emph{We then show that tree-structured DATs are analysable by extending the bottom-up algorithm with an additional operator (see \Cref{alg:bottom_up_DAT})}.
Concretely, we use attribute domains with three operators: $\operOR$, $\operAND$, $\operSAND$, where $\operSAND$ distributes over $\operOR$ and $\operAND$, and $\operAND$ over $\operOR$.
We prove this \namecref{alg:bottom_up_DAT} correct in our formal semantics.
Note that earlier algorithms do not provide explicit correctness results in terms of semantics. 
Our result is non-trivial, because the metrics are formally defined on the (poset) semantics of a \DAT, while the algorithm works on its syntactic AT structure. 

\paragraph{Dynamic \DAGs}
Efficient computation of metrics for \DAG-structured \DATs is left as future research challenge.
A na\"ive, inefficient algorithm would enumerate all posets in the semantics.
Instead, one could extend \BDD-algorithms for static \DAGs to dynamic \ATs.
This is non-trivial: \BDDs ignore the order of attack steps.
Thus, efficient analysis of \DAG-structured dynamic \ATs is an important open problem.


\hlbox{%
\paragraph{Contributions}
In summary, our contributions are: 
\begin{enumerate}[label=\textbf{\arabic*.},leftmargin=1.3em]
\item	An efficient and generic \BDD-based algorithm for \DAG-\SATs,
		working for semiring attribute domains $(\Vdom,\operOR,\operAND)$;
\item	A theorem proving that computing a minimal successful attack is NP-hard;
\item	An algorithm to compute the $k$-top best attacks;
\item	A novel and intuitive poset semantics for dynamic attack trees that better matches the order behavior of \SAND{s};
\item	A bottom-up algorithm for tree-structured \DATs; 
\item	Future directions to analyse \DAG-\DATs efficiently
		(identified as an open problem).
\end{enumerate}
We place ourselves in the literature in \Cref{tab:all_algos_intro} and \Cref{sec:conclu}.
}

\paragraph{Paper structure}
We introduce all essential concepts and our formal syntax of attack trees in \Cref{sec:AT}\textbf{.}\;
\Crefrange{sec:SAT}{sec:SAT_DAGs} study static \ATs, and \Crefrange{sec:DAT}{sec:DAT_DAGs} study dynamic \ATs.
The paper concludes in \Cref{sec:conclu}, revising related work.


\section{Attack Trees}
\label{sec:AT}

\subsection{Attack tree models}
\label{sec:AT:models}

Syntactically, an attack tree is a rooted \DAG that models an undesired event caused by a malicious party, e.g.\ a security breach.
\ATs show a top-down decomposition of a top-level attack---the unique root of the \DAG---into simpler steps.
The leaves are basic steps carried out by the attacker.
The nodes between the basic steps and the root are intermediate attacks, and are labelled with gates to indicate how its input nodes (children) combine to make the intermediate attack succeed.

\paragraph{Basic Attack Steps}
The leaves of the \AT represent indivisible actions carried out by the attacker, e.g.\ smash a window,
decrypt a file by brute-force attack,
etc.
These \BAS nodes can be enriched with attributes, such as its execution time, the cost incurred, and the probability with which the \BAS occurs.
We model attributes via an attribution function $\attrOp\from\BAS\to\Vdom$.

\paragraph{Gates}
Non-leaf nodes serve to model intermediate attacks that lead to the \emph{top-level attack} (\TLA).
Each has a logical \emph{gate} that describes how its children combine to make it succeed:
an \OR gate means that the intermediate attack will succeed if any of its child nodes succeeds;
an \AND gate indicates that all children must succeed, in any order or possibly in parallel;
a \SAND gate (viz.\ sequential-\AND) needs all children to succeed sequentially in a left-to-right order.



\begin{figure}
	\centering
	\def\HEIGHT{23ex}
	\begin{subfigure}[b]{.48\linewidth}
		\centering\hspace*{-2em}%
		\includegraphics[height=\HEIGHT]{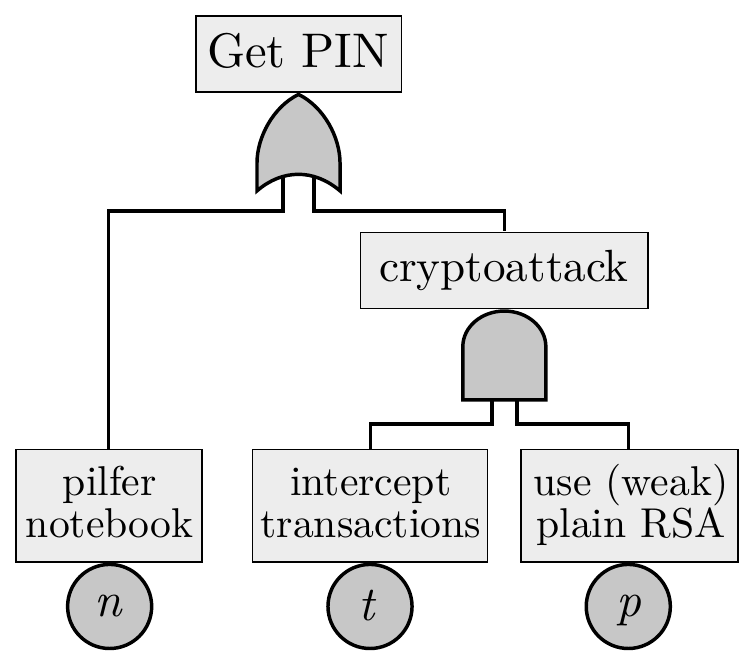}
		\caption{A static-tree \AT: \sampleTs\hspace*{2em}}
		\label{fig:AT:example:static}
	\end{subfigure}
	\hspace{-1em}
	\begin{subfigure}[b]{.48\linewidth}
		\centering
		\includegraphics[height=\HEIGHT]{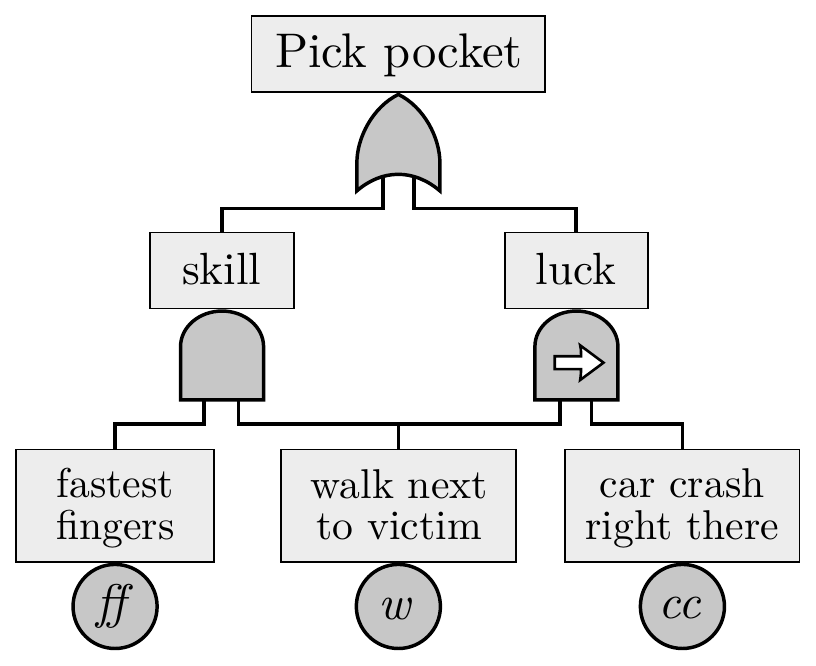}
		\caption{A dynamic-\DAG \AT: \sampleTd}
		\label{fig:AT:example:dynamic}
	\end{subfigure}
	\caption{Attack tree models}
	\label{fig:AT}
	\vspace{-2ex}
\end{figure}

\begin{example}
	\label{ex:running_examples}
	\Cref{fig:AT:example:static} shows a static attack tree, \sampleTs, that models how a \acronym{pin} code can be obtained by either pilfering a notebook, or via a cryptographic attack.
	The pilfering is considered atomic, while the cryptoattack consists of two steps which must both succeed: intercepting transactions, and abusing weak \acronym{rsa} encryption. 
	Note that \sampleTs has a plain tree structure.
	Instead, \Cref{fig:AT:example:dynamic} shows a dynamic attack tree, \sampleTd, with a \DAG structure.
	Its \TLA is to pick a pocket, which is achieved either by having ``skill'' or ``luck.''
	In both cases the attacker must walk next to the victim, so these gates share the \BAS child $w$, making \sampleTd not a tree.
	In the case of ``luck'' the order of events matters: if the attacker first walks next to the victim and then a traffic accident happens, the pick-pocket succeeds.
	Thus, this intermediate attack is modelled with a \SAND gate.
	Instead, ``fastest fingers'' is an inherent attacker flair that is always present.
	It is thus meaningless to speak of an order w.r.t.\ the attacker-victim encounter, so an \AND gate is used.
\end{example}

\paragraph{Security metrics}
A key goal in quantitative security analysis is to compute relevant \emph{security metrics}, which quantify how well a system performs in terms of security.
Typical examples are the cost of the cheapest attack, the probability of the most likely one, the damage produced by the most harmful one, and combinations thereof.
Security metrics for \ATs are typically obtained by combining the attribute valuations $\attr{a}\in\Vdom$ assigned to the \BAS.
For example, the cheapest attack is the attack where the sum of the cost of the \BAS is minimal.
\emph{The key topic of this paper is to compute a large class of security metrics in a generic and efficient way}.
For this we give different (formal) semantics to static and dynamic \ATs, and introduce algorithms based on these semantics and the \AT structure.
We begin by formalising the notion of~\AT~model.

\subsection{Attack tree syntax}
\label{sec:AT:syntax}

\ATs are rooted \DAG{s} with typed nodes: we consider types $\mathbb{T}=\{\tBAS,\tOR,\tAND,\tSAND\}$.
For Booleans we use $\BB=\{\top,\bot\}$.
The edges of an \AT are given by a function $\chOp$ that assigns to each node its (possibly empty) sequence of children.
We use set notation for sequences, e.g.\ $e\in(e_1,\ldots,e_m)$ means $\exists i.\,e_i=e$, and we denote the empty sequence by $\varepsilon$.

\begin{definition}
	\label{def:AT:syntax}
	An \emph{attack tree} is a tuple
	\mbox{$\T=\left(\ATnodes,\typOp,\chOp\right)$} where:
	\begin{itemize}[topsep=.5ex,parsep=.1ex,itemsep=0pt]
	\item	$\ATnodes$ is a finite set of \emph{nodes};
	\item	$\typOp\from \ATnodes\to\mathbb{T}$
			gives the \emph{type} of each node;
	\item	$\chOp\from \ATnodes\to\ATnodes^\ast$
			gives the sequence of \emph{children} of a node.
	\end{itemize}
	Moreover, \T satisfies the following constraints:
	\begin{itemize}[topsep=.5ex,parsep=.1ex,itemsep=0pt]
	\item	$(\ATnodes,\mathobject{E})$ is a connected \DAG, where\\
			\phantom{.}\hfill%
			$\mathobject{E}=\left\{(v,u)\in\ATnodes^2\mid u\in\child{v}\right\}$;
	\item	\T has a unique root, denoted \ATroot:\\
			\phantom{.}\hfill%
			$\exists!\,\ATroot\in \ATnodes.~%
			\forall v\in \ATnodes.~\ATroot\not\in\child{v}$;
	\item	{$\normalfont\BAS_{\T}$} nodes are the leaves of \T:\\
			\phantom{.}\hfill%
			$\forall v\in \ATnodes.~%
			{\type{v}=\tBAS} \Leftrightarrow {\child{v}=\varepsilon}$.
	\end{itemize}
\end{definition} 

We omit the subindex \T if no ambiguity arises, e.g.\ an attack tree $\T=(\ATnodes,\typOp,\chOp)$ defines a set $\BAS\subseteq\ATnodes$ of basic attack steps.
If \mbox{$u\in\child{v}$} then $u$ is called a \emph{child} of $v$, and $v$ is a \emph{parent} of $u$.
Moreover we write $v=\AND(v_1,\ldots,v_n)$ if ${\type{v}=\tAND}$ and ${\child{v}=(v_1,\ldots,v_n)}$, and analogously for $\OR$ and $\SAND$.
We denote the universe of \ATs by \allATs and call $\T\in\allATs$ \emph{tree-structured} if
${\forall v,u\in\ATnodes. \child{v}\cap\child{u}=\varepsilon}$;
else we say that \T has \emph{DAG structure}.



\section{Analysis of Static Attack Trees}
\label{sec:SAT}

In the absence of \SAND gates the order of execution of the \BAS is irrelevant.
This allows for simple semantics given in terms of a Boolean function called \emph{structure function}.
The computation of security metrics, however, crucially depends on whether the \AT structure is a tree or a \DAG.

\subsection{Semantics for static attack trees}
\label{sec:SAT:semantics}

The semantics of a static attack tree (\SAT) is defined by its successful attack scenarios, in turn given by its structure function.
First, we define the notions of attack and attack suite.

\begin{definition}
	\label{def:SAT:attack}
	An \emph{attack scenario}, or shortly an \emph{attack},
	of a static \AT \T is a subset of its basic attack steps:
	\mbox{$\attack\subseteq\BAS_{\T}$}.
	An \emph{attack suite} is a set of attacks
	\mbox{$\suite\subseteq2^{\BAS_{\T}}$}.
	We denote by 
	$\allAttacks_{\T}={2^{\BAS_{\T}}}$
	the universe of attacks of \T, and by 
	$\allSuites_{\T}=2^{2^\BAS}$
	the universe of attack suites of \T.
\end{definition}

Intuitively, an attack suite $\suite\in\allSuites$ represents different ways in which the system can be compromised.
From those, one is interested in attacks $\attack\in\suite$ that actually represent a threat.
For instance for \sampleTs in \Cref{ex:running_examples} one such attack is $\{t,p\}$.
In contrast, $\{t\}$ is an attack that does not succeed, i.e.\ it cannot cause a \TLA.
The structure function $\sfunT(v,\attack)$ indicates whether the attack $\attack\in\allAttacks$ succeeds at node $v\in\ATnodes$ of \T.

\begin{definition}
	\label{def:SAT:sfun}
	The \emph{structure function} $\sfunT\from\ATnodes\times\allAttacks\to\BB$
	of a static attack tree \T is given by:
	\begin{align*}
	  \sfunT(v,\attack) =&
	  \begin{cases}
		\top  & \parbox{55pt}{if~$\type{v}=\tOR$}~\text{and}~%
				\exists u\in\child{v}.\sfunT(u,\attack)=\top,\\
		\top  & \parbox{55pt}{if~$\type{v}=\tAND$}~\text{and}~%
				\forall u\in\child{v}.\sfunT(u,\attack)=\top,\\
		\top  & \parbox{55pt}{if~$\type{v}=\tBAS$}~\text{and}~%
				v\in\attack,\\
		\bot  & \text{otherwise}.
	  \end{cases}
	\end{align*}
\end{definition}

We let $\sfunT(\attack) \doteq \sfunT(\ATroot,\attack)$.
An attack \attack is called \emph{successful} if $\sfunT(\attack)=\top$, i.e.\ it makes the \TLA of \T succeed; if moreover no proper subset of \attack is successful then \attack is a \emph{minimal attack}.

\SATs are \emph{coherent} \cite{BP75}, meaning that adding attack steps preserves success: if \attack is successful then so is $\attack\cup\{a\}$ for any $a\in\BAS$.
Thus, the suite of successful attacks of an \AT is characterised by its minimal attacks.
This was first formalised in \cite{MO06}, and is called multiset semantics in \cite{WAFP19}:
\begin{definition}
	\label{def:SAT:semantics}
	The \emph{semantics of a static \AT} \T is its suite of minimal attacks:
	$\amin = \left\{ \attack\in\allAttacks_{\T} \mid
		\sfunT(\attack)\land \attack~\text{is minimal} \right\}$.
\end{definition}

\begin{example}
	\label{ex:SAT:semantics}
	The static \AT in \Cref{ex:running_examples}, \sampleTs
	(\Cref{fig:AT:example:static}), has three successful attacks:
	$\{n\}$, $\{t,p\}$, and $\{n,t,p\}$. The first two are minimal,
	so we have: $\ssem{\sampleTs}=\{\{n\},\{t,p\}\}$.
\end{example}

An alternative characterisation of this semantics for tree-structured \SATs is shown as \Cref{lemma:ssem}, which also provides the key argument for correctness of the bottom-up procedure (\Cref{alg:bottom_up_SAT} in \Cref{sec:SAT_trees}). 
\Cref{lemma:ssem} can be used to compute the semantics of \Cref{def:SAT:semantics} by recursively applying cases \textit{\ref{lemma:ssem:BAS})--\ref{lemma:ssem:AND})} to $\ssem{\ATroot}\doteq\ssem{\T}$.
However, \BDD representations provide more compact encodings of this semantics (see \Cref{sec:SAT_DAGs}).

We formulate \Cref{lemma:ssem} for binary \ATs; its extension to arbitrary trees is straightforward but notationally cumbersome.
\ifthenelse{\boolean{forCSF}}%
{
The proof uses induction on the structure of an \AT, whose root is the node $v$ in each left-hand side \sem{v} of cases~\textit{\ref{lemma:ssem:BAS})--\ref{lemma:ssem:AND})}.
Case~\textit{\ref{lemma:ssem:disjoint})} is trivial by the tree-structure.
We give the full proof in \cite{arXiv}.
}{
The proof of \Cref{lemma:ssem} is given in \Cref{sec:proofs}, \cpageref{lemma:ssem:proof}.
}

\begin{lemma}
	\label{lemma:ssem}
	Consider a \SAT with nodes $a\in\BAS, v_1, v_2\in\ATnodes$,
	that has a proper tree structure. Then:
	\begin{enumerate}
	\def\REF#1{\textit{\ref{#1})}}
	\item	$\ssem{a} = \{ \{a\}\}$;
			\label{lemma:ssem:BAS}
	\item	$\ssem{\OR(v_1,v_2)} = \ssem{v_1} \cup \ssem{v_2}$;
			\label{lemma:ssem:OR}
	\item	$\ssem{\AND(v_1,v_2)} = \{ \attack_1\cup \attack_2 \mid
				\attack_1\in\ssem{v_1} \land \attack_2\in\ssem{v_2} \}$;
			\label{lemma:ssem:AND}
	\item	In cases \REF{lemma:ssem:OR} and \REF{lemma:ssem:AND}
			the \ssem{v_i} are disjoint, and in case \REF{lemma:ssem:AND}
			moreover the $A_i$ are pairwise disjoint.
			\label{lemma:ssem:disjoint}
	\end{enumerate}
\end{lemma}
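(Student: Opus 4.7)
The plan is to proceed by structural induction on the attack tree, treating each case of the lemma in turn and exploiting the tree-structure assumption to obtain the disjointness property of item \textit{\ref{lemma:ssem:disjoint})}, which in turn will be essential for the minimality arguments in the other cases. First I would establish item \textit{\ref{lemma:ssem:disjoint})} as a preliminary observation: since \T is tree-structured, the subtrees rooted at $v_1$ and $v_2$ share no nodes, hence $\BAS_{v_1}\cap\BAS_{v_2}=\emptyset$. Because any $\attack\in\ssem{v_i}$ is a subset of $\BAS_{v_i}$, this gives both $\ssem{v_1}\cap\ssem{v_2}=\emptyset$ for case \textit{\ref{lemma:ssem:OR})} and pairwise disjointness of $\attack_1,\attack_2$ in case \textit{\ref{lemma:ssem:AND})}.

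Item \textit{\ref{lemma:ssem:BAS})} is immediate from \Cref{def:SAT:sfun}: $\sfunT(a,\attack)=\top$ iff $a\in\attack$, so the only successful attacks on the subtree $a$ are supersets of $\{a\}$, and $\{a\}$ is the unique minimal one. For item \textit{\ref{lemma:ssem:OR})}, I would unfold \Cref{def:SAT:sfun} at $v=\OR(v_1,v_2)$: an attack \attack succeeds at $v$ iff it succeeds at $v_1$ or at $v_2$. The inclusion $\ssem{v_1}\cup\ssem{v_2}\subseteq\ssem{v}$ follows because any $\attack\in\ssem{v_i}$ already makes $v_i$ (and hence $v$) succeed, and by the disjointness just established no proper subset of \attack can succeed at the other subtree; thus \attack is minimal at $v$. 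Conversely, if $\attack\in\ssem{v}$, then \attack succeeds at some $v_i$, and by minimality at $v$ it must be minimal at $v_i$ and contained in $\BAS_{v_i}$, so $\attack\in\ssem{v_i}$.

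Item \textit{\ref{lemma:ssem:AND})} is the main obstacle, since both directions require careful minimality arguments. For $\supseteq$, given $\attack_i\in\ssem{v_i}$, I would verify that $\attack_1\cup\attack_2$ succeeds at $v=\AND(v_1,v_2)$ (obvious) and then argue minimality: if $\attack'\subsetneq\attack_1\cup\attack_2$ succeeds at $v$, then $\attack'\cap\BAS_{v_i}$ succeeds at each $v_i$; since the $\BAS_{v_i}$ are disjoint, $\attack'\cap\BAS_{v_i}\subseteq\attack_i$ with at least one inclusion being strict, contradicting the minimality of some $\attack_i$. For $\subseteq$, given $\attack\in\ssem{v}$, I would set $\attack_i\doteq\attack\cap\BAS_{v_i}$ and show each $\attack_i$ succeeds at $v_i$ (because every $\BAS$ node reachable from $v_i$ lies in $\BAS_{v_i}$). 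Minimality of \attack at $v$ forces each $\attack_i$ to be minimal at $v_i$; otherwise one could shrink $\attack_i$ and still have $v$ succeed. Finally $\attack=\attack_1\cup\attack_2$ because any \BAS used by \attack and relevant to success must lie in one of the $\BAS_{v_i}$, and any extraneous element would break minimality.

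The subtle point throughout is that minimality at the root is only equivalent to minimality at the two children \emph{because} of the disjoint-\BAS property, which is exactly where the tree-structure hypothesis is used; in a \DAG a shared \BAS could belong to successful attacks of both subtrees, and the decomposition $\attack=\attack_1\cup\attack_2$ into minimal constituents would fail. I expect the bookkeeping in the AND case, rather than any deep idea, to be the main source of care needed when writing out the proof in full.
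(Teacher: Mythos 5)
Your proposal is correct and follows essentially the same route as the paper's own proof: structural induction on the tree, with item \textit{4)} obtained first from the disjointness of the \BAS descendants of $v_1$ and $v_2$, and then used to transfer minimality between the root and its children in the \OR and \AND cases. The only difference is presentational—you spell out the decomposition $\attack_i=\attack\cap\BAS_{v_i}$ and both inclusions slightly more explicitly than the paper does, which is harmless.
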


\subsection{Security metrics for static attack trees}
\label{sec:SAT:metrics}

\Cref{lemma:ssem} allows for \emph{qualitative analyses}, i.e.\ finding the minimal sets of \BAS that lead to a \TLA.
To enable \emph{quantitative analyses}, i.e.\ computing security metrics such as the minimal time and cost among all attacks, all \BAS are enriched with attributes.
We thus define security metrics in three steps:
first an attribution $\attrOp$ assigns a value to each \BAS;
then a security metric $\metrAOp$ assigns a value to each attack scenario;
and finally the metric $\metrOp$ assigns a value to each attack suite.


\begin{definition}
	\label{def:metric}
	Given an \AT and a set \Vdom of values:
	\begin{enumerate}
	\item	an \emph{attribution} $\attrOp\from\BAS\to\Vdom$
			assigns an \emph{attribute value} \attr{a},
			or shortly an \emph{attribute}, to each basic attack step $a$;
	\item	a \emph{security metric} refers both to a function
			$\metrAOp\from\allAttacks_{\T}\to\Vdom$ that assigns a value
			$\metrA{\attack}$ to each attack $\attack$;
			\newline
			and to a function $\metrOp\from\allSuites_{\T}\to\Vdom$ that
			assigns a value $\metr{\suite}$ to each attack suite $\suite$.
	\end{enumerate}
	We write $\metr{\T}$ for $\metr{\ssem{\T}}$, setting the metric of an \AT to the metric of its minimal attack suites. 
\end{definition}

\begin{example}
	\label{ex:metric}
	Let $\Vdom=\NN$ denote time, so that \attr{a} gives the time required
	to perform the basic attack step $a$.
	Then the time needed to complete an attack \attack can be given by
	$\metrA{\attack} = \sum_{a\in\attack} \attr{a}$,
	and the time of the fastest attack in a suite \suite is
	$\metr{\suite} = \min_{\attack\in\suite} \metrA{\attack}$.
	If instead $\Vdom=[0,1]\subset\RR$ denotes probability,
	then the probability of an attack is given by
	$\metrA{\attack} = \prod_{a\in\attack} \attr{a}$,
	and the probability of the likeliest attack in a suite is
	$\metr{\suite} = \max_{\attack\in\suite} \metrA{\attack}$.
\end{example}

\Cref{def:metric} gives a lax notion of metric.
For a more concise definition---that enables computation for static \ATs, but does not depend on their tree/\DAG-structure---one must resort to the semantics.
For this we follow an approach similar to that of Mauw and Oostdijk \cite{MO06}.
Namely, we define a \emph{metric function} $\metrOp\from \allATs \to\Vdom$ that yields a value for each \SAT based on its semantics, an attribution, and two binary operators $\operOR$ and $\operAND$.

\begin{definition}
	\label{def:SAT:metric}
	\def\VH{\vphantom{\bigoperAND_{\ssem{\T}}}}
	Let \Vdom be a set:
	\begin{enumerate}
	\item an \emph{attribute domain} over \Vdom is a tuple $\domain=(\Vdom,\operOR,\operAND)$,
	whose \emph{disjunctive operator} \mbox{$\operOR\from\Vdom^2\to\Vdom$},
	and \emph{conjunctive operator} \mbox{$\operAND\from\Vdom^2\to\Vdom$},
	are associative and commutative;
	\item the attribute domain is a \emph{semiring}\footnote{Since we require $\operAND$ to be commutative, $\domain$ is in fact a commutative semiring. Further, rings often include a neutral element for disjunction and an absorbing element for conjunction, but these are not needed in \Cref{def:SAT:metric}.}
	if $\operAND$ distributes over $\operOR$, i.e.
	$\forall\,x,y,z\in\Vdom.\;%
		x\operAND(y\operOR z)=(x\operAND y)\operOR(x\operAND z)$;
	\item
	let \T be a static \AT and $\attrOp$ an attribution on \Vdom.
	The \emph{metric for \T} associated to $\attrOp$ and $\domain$ is given by:
	\begin{align*}
		\metr{\T} &~=
			\underbrace{\VH\bigoperOR_{\attack\in\ssem{\T}}\:}_{\mathlarger\metrOp}\,
			\underbrace{\VH~\bigoperAND_{a\in\attack}~}_{\mathlarger\metrAOp}
			\attr{a}.
	\end{align*}
	\end{enumerate}
\end{definition}

\begin{example}
	\label{ex:SAT:metric}
	Consider the static \AT $\sampleTs=\OR\big(n,\AND(t,p)\big)$ from \Cref{fig:AT:example:static}, and recall that $\ssem{\sampleTs}=\{\{n\},\{t,p\}\}$.
	Let ${\Vdom=\NN}$ denote time as in \Cref{ex:metric}, and consider an attribution ${\attrOp}=\{{n\mapsto1}, {t\mapsto100}, {p\mapsto0}\}$.
	Then the metric for the fastest attack time is given by the attribute domain $(\Vdom,\minb,+)$: 
	\begin{align*}
	\metr{\sampleTs}
		&= \bigoperOR_{\attack\in\{\{n\},\{t,p\}\}}~%
		   \bigoperAND_{a\in\attack} \attr{a}\\
		&=~ \attr{n}\operOR\big(\attr{t}\operAND\attr{p}\big)
		~=~ 1 \minb (100 + 0)
		~=~ 1,
	\end{align*}
	where $\minb$ has infix notation, i.e.\ ${x \minb y = \min(x,y)}$.
	For probability, let ${\Vdom'=[0,1]}$ and ${\attrOp'}=\{{n\mapsto0.07},{p\mapsto0.01},$ $ {t\mapsto0.95}\}$.
	Then the attribute domain $(\Vdom',\maxb,\ast)$ allows to compute the probability of the likeliest attack:
	$\metrOp'(\sampleTs)
		= {\attrOp'(n) \operOR' \big(\attrOp'(t) \operAND' \attrOp'(p)\big)}
		= {0.07 \maxb (0.95 \ast 0.01)} ~=~ 0.07.$
\end{example}


\section{Computations for tree-structured SATs}
\label{sec:SAT_trees}

\Cref{ex:SAT:metric} illustrates how to compute metrics for static \ATs using \Cref{def:SAT:metric}.
However, this method requires to first compute the semantics of the attack tree, which is \emph{exponential} in the number of nodes \card{\ATnodes}---see \Cref{theo:NP_hard} in \Cref{sec:SAT_DAGs}, or \cite{KW18}.

A key result in  \cite{MO06} is that metrics defined on attribute domains $(\Vdom,\operOR,\operAND)$ that are semirings, can be computed via a bottom-up algorithm that is \emph{linear} in \card{\ATnodes} as long as the static \AT has a proper tree structure.
We repeat this result here, giving a more direct proof of correctness, and extending it to dynamic attack trees in \Cref{sec:SAT_DAGs}.

\subsection{Bottom-up algorithm}
\label{sec:SAT_trees:algorithm}

First we formulate the procedure as \Cref{alg:bottom_up_SAT}, which propagates the attribute values from the leaves of the \SAT to its root, interpreting \OR gates as $\operOR$ and \AND{s} as $\operAND$.
This algorithm is clearly linear in \card{\ATnodes} since each node in the tree \T is visited once.
\Cref{alg:bottom_up_SAT} can be called on any node of \T: to compute the metric \metr{\T} it must be called on its root node \ATroot.


\begin{algorithm}
	\KwIn{Static attack tree $\T=(\ATnodes,\typOp,\chOp)$,\newline
	      node $v\in\ATnodes$,\newline
	      attribution $\attrOp$,\newline
	      semiring attribute domain $\domain=(\Vdom,\operOR,\operAND)$.}
	\KwOut{Metric value $\metr{\T}\in\Vdom$.}
	\BlankLine
	\uIf{$\type{v}=\tOR$}{%
		\Return{$\bigoperOR_{u\in\child{v}}
		         \BUSAT(\T,u,\attrOp,\domain)$}
	} \uElseIf{$\type{v}=\tAND$}{%
		\Return{$\bigoperAND_{u\in\child{v}}
		         \BUSAT(\T,u,\attrOp,\domain)$}
	} \Else(\tcp*[h]{$\type{v}=\tBAS$}) {%
		\Return{\attr{v}}
	}
	\caption{\BUSAT for a tree-structured \SAT \T}
	\label{alg:bottom_up_SAT}
\end{algorithm}

We state the correctness of \Cref{alg:bottom_up_SAT} in \Cref{theo:bottom_up_SAT},
\ifthenelse{\boolean{forCSF}}%
{
whose proof relies on \Cref{lemma:ssem}, and therefore works by structural induction on the (binary) tree \T---we provide the full proof in \cite{arXiv}.
}{
which we prove in \Cref{theo:bottom_up_SAT:proof}, \cpageref{theo:bottom_up_SAT:proof}.
}
This result was proven in \cite{MO06} via rewriting rules for \ATs with a slightly different structure denoted ``bundles.''
Our result concerns attack trees in the syntax from \Cref{def:AT:syntax}, which is more conforming to the broad literature \cite{Wei91,Sch99,BLP+06,JW08,KMRS11,BET13,KRS15}.

\begin{theorem}
	\label{theo:bottom_up_SAT}
	Let \T be a static \AT with tree structure,
	$\attrOp$ an attribution on \Vdom,
	and $\domain=(\Vdom,\operOR,\operAND)$ a semiring attribute domain.
	Then $\metr{\T} = \BUSAT(\T,\ATroot,\attrOp,\domain)$.
\end{theorem}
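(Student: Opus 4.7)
The plan is to prove this by structural induction on the tree \T, using \Cref{lemma:ssem} to rewrite the semantics at each internal node and the semiring axioms of \domain to manipulate the resulting aggregation.

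For the base case $v \in \BAS$, \Cref{alg:bottom_up_SAT} returns $\attr{v}$, while \Cref{lemma:ssem}\textit{(\ref{lemma:ssem:BAS})} gives $\ssem{v} = \{\{v\}\}$, so \Cref{def:SAT:metric} yields $\metr{v} = \attr{v}$ directly. The induction hypothesis is that $\BUSAT(\T,v_i,\attrOp,\domain) = \metr{v_i}$ holds for the children of an internal node, where we interpret $\metr{v_i}$ as the metric restricted to the subtree rooted at $v_i$ (well-defined because \T is a proper tree, so the subtree is itself a static \AT).

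For the $\OR$ case $v = \OR(v_1,v_2)$, \Cref{lemma:ssem}\textit{(\ref{lemma:ssem:OR})} gives $\ssem{v} = \ssem{v_1} \cup \ssem{v_2}$, with the union disjoint by \textit{(\ref{lemma:ssem:disjoint})}. Splitting the outer $\bigoperOR$ along this disjoint union and invoking the induction hypothesis yields $\metr{v} = \metr{v_1} \operOR \metr{v_2}$, which is exactly what the algorithm returns. The $\AND$ case $v = \AND(v_1,v_2)$ is the interesting one: by \Cref{lemma:ssem}\textit{(\ref{lemma:ssem:AND})} and the pairwise disjointness from \textit{(\ref{lemma:ssem:disjoint})}, each attack $A = A_1 \cup A_2$ with $A_1 \cap A_2 = \emptyset$ satisfies
\begin{equation*}
\bigoperAND_{a \in A} \attr{a} = \Bigl(\bigoperAND_{a \in A_1} \attr{a}\Bigr) \operAND \Bigl(\bigoperAND_{a \in A_2} \attr{a}\Bigr),
\end{equation*}
since $\operAND$ is associative and commutative. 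Moreover, the disjointness of $\ssem{v_1}$ and $\ssem{v_2}$ together with $\ssem{v_1} \cap \ssem{v_2} = \emptyset$ implies the map $(A_1,A_2) \mapsto A_1 \cup A_2$ is injective on $\ssem{v_1} \times \ssem{v_2}$, so the outer $\bigoperOR$ over $\ssem{v}$ can be rewritten as a double $\bigoperOR$ indexed by $\ssem{v_1} \times \ssem{v_2}$.

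The main obstacle, and the step where the semiring hypothesis is essential, is pushing the outer $\bigoperOR$ past the $\operAND$: we need to show
\begin{equation*}
\bigoperOR_{A_1 \in \ssem{v_1}} \bigoperOR_{A_2 \in \ssem{v_2}} \bigl(m_1(A_1) \operAND m_2(A_2)\bigr) \;=\; \Bigl(\bigoperOR_{A_1} m_1(A_1)\Bigr) \operAND \Bigl(\bigoperOR_{A_2} m_2(A_2)\Bigr),
\end{equation*}
where $m_i(A_i) = \bigoperAND_{a \in A_i} \attr{a}$. This is a finite-case consequence of distributivity of $\operAND$ over $\operOR$ applied iteratively (using commutativity of $\operAND$ to distribute on both sides), and it is the only place the semiring axiom is invoked. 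Applying the induction hypothesis then gives $\metr{v} = \metr{v_1} \operAND \metr{v_2}$, matching the algorithm. The extension to $n$-ary $\OR$/$\AND$ gates alluded to after \Cref{lemma:ssem} proceeds by the same argument with iterated unions and products, without further use of the semiring structure.
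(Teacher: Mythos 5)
Your proposal is correct and follows essentially the same route as the paper's own proof: structural induction on the (binary) tree, using \Cref{lemma:ssem} with its disjointness clause to split the outer $\bigoperOR$ in the \OR case, and invoking distributivity of $\operAND$ over $\operOR$ exactly once, in the \AND case, to factor the double $\bigoperOR$ over $\ssem{v_1}\times\ssem{v_2}$ into a product of two recursive calls. Your explicit remark that $(\attack_1,\attack_2)\mapsto\attack_1\cup\attack_2$ is injective is a small clarification of what the paper handles implicitly via disjoint unions, but it does not change the argument.
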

\subsection{Metrics as semiring attribute domains}
\label{sec:SAT_trees:metrics}

Many relevant metrics for security analyses on \SATs can be formulated as semiring attribute domains.
\Cref{tab:SAT:metric} shows examples, where ${[0,1]_\QQ=[0,1]\cap\QQ}$, and $\NN_\infty=\NN\cup\{\infty\}$ includes $0$ and $\infty$.
For instance ``min cost'' can be formulated in terms of ${(\NN_\infty,\min,+)}$, which is a semiring attribute domain because $+$ distributes over $\min$, i.e.\ ${a+(b\minb c)}={(a+b)\minb(a+c)}$ for all ${a,b,c\in\NN_\infty}$.
Also, attribute domains can handle \SAND gates providing that the execution order is irrelevant for the metric.
This works for example with min skill and max damage.

\paragraph{Non-semiring metrics}
Nevertheless, some meaningful metrics do fall outside this category. 
For instance and as observed in~\cite{MO06}, the cost to defend against all attacks is represented by $(\NN_\infty,+,\min)$, but since $\min$ does not distribute over $+$ (i.e.\ in general ${a\minb(b+ c)}\neq{(a\minb b)+(a\minb c)}$) then this metric cannot be computed via \Cref{alg:bottom_up_SAT}.
Less well-known is that the total attack probability---given by $\metr{\T}=\sum_{\attack\in\ssem{\T}} \metrA{\attack}$ where $\metrA{\attack}=\big(\prod_{a\in \attack} \attr{a}\big) \cdot \big(\prod_{a\not\in \attack} (1-\attr{a})\big)$---can neither be formulated as an attribute domain.
The problem is that $\metrA{\attack}$ does not have the shape $\bigoperAND_{a\in\attack} \attr{a}$.
Interestingly though, this probability can still be computed via a bottom-up procedure by taking $\metr{\AND(v_1,v_2)} = \metr{\ssem{v_1}} \ast \metr{\ssem{v_2}}$ and $\metr{\OR(v_1,v_2)} = \metr{\ssem{v_1}} + \metr{\ssem{v_2}} - \metr{\ssem{v_1}\cap\ssem{v_2}}$.


\begin{table}
  \centering
  \begin{tabular}{l>{}c>{\quad}c>{\quad}c}
	\toprule
	\scshape Metric       & $\Vdom$           & $\operOR$ & $\operAND$ \\
	\midrule
	min cost              & $\NN_\infty$      & $\min$ & $+$    \\
	min time (sequential) & $\NN_\infty$      & $\min$ & $+$    \\
	min time (parallel)   & $\NN_\infty$      & $\min$ & $\max$ \\
	min skill             & $\NN_\infty$      & $\min$ & $\max$ \\
	max challenge         & $\NN_\infty$      & $\max$ & $\max$ \\
	max damage            & $\NN_\infty$      & $\max$ & $+$    \\ 
	discrete prob.        & $[0,1]_\QQ$       & $\max$ & $\ast$ \\
	continuous prob.      & $\RR\to[0,1]_\QQ$ & $\max$ & $\ast$ \\
	\bottomrule
  \end{tabular}
  \caption{\SAT metrics with semiring attribute domains}
  \label{tab:SAT:metric}
\end{table}

\paragraph{Stochastic analyses}
Semirings are closed under finite and infinite products \cite{Mac71}: this allows to propagate not only tuples of attribute values, but also functions over them. 
In particular, \emph{cumulative density functions} that assign a probability $t\mapsto P[X\leqslant t]$ constitute a semiring \cite{AHPS14}.
Such functions are useful, e.g.\ to consider attack probabilities, cost, or damage, as functions that evolve on time.

\paragraph{Pareto analyses}
Moreover, \emph{Pareto frontiers} can be formulated as semirings.
Pareto analysis is a cornerstone in multi-parameter optimisation, that seeks the dominant (i.e.\ best-performing) solutions over multiple attributes.
A solution is called \emph{Pareto-efficient} if it is not dominated by any other solution in the ordering relation \cite{LLCM10}.
For example consider three attack scenarios:
$\attack_1$ that takes $2$ time units and has cost $3$;
$\attack_2$ with time $1$ and cost $3$; and
$\attack_3$ with time $2$ and cost $1$.
Then attack $\attack_1$ is not Pareto-efficient because $A_2$ is faster at same cost.
On the other hand, $\attack_2$ and $\attack_3$ are incomparable because the former is faster while the latter is cheaper.
So among these three attack scenarios, $\attack_2$ and $\attack_3$ are in the Pareto frontier.
Pareto frontiers are sets of Pareto-efficient solutions: for \AT metrics these are cross-products of semiring attribute domains, which preserve the semiring property \cite{Mac71}.



\section{Computations for DAG-structured SATs}
\label{sec:SAT_DAGs}

Attack trees with shared subtrees cannot be analysed via a bottom-up procedure on its (\DAG) structure, as we illustrate next in \Cref{ex:bottom_up_DAG}.
This is a classical result from fault tree analysis \cite{LGTL85}, later rediscovered for attack trees e.g.\ in \cite{KW18}.

Various methods to analyse \DAG-structured \ATs have been proposed: see \Cref{tab:all_algos_intro} for contributions over the last 15 years, including \cite{AGKS15,BET13,DMCR06,JW08,KW18}.
These methods are often geared towards specific metrics, e.g.\ cost, time, or probability \cite{BLP+06,JW08,AHPS14}.
Others use general-purpose techniques of high complexity and low efficiency, such as model checking \cite{DMCR06,KRS15}. 

We present a novel algorithm based on a binary decision diagram (\BDD) representation of the structure function of the static \AT.
\BDDs offer a very compact encoding of Boolean functions, and are heavily used in model checking \cite{HS99,BK08,KP12}, as well as for probabilistic fault tree analysis \cite{Rau93,RS15b}.

Our \BDD-based approach works for semiring attribute domains (with neutral elements for the operators $\operOR$ and $\operAND$) regardless of the \AT structure.
It thus extends the generic and efficient result of \cite{MO06}---that works for tree-structure \SATs only---to include \DAG-structured \SATs as well.

Our algorithm traverses the \BDD bottom-up, which makes it linear in its size.
\BDDs, however,  can be exponential in the tree size \cite{RD97}. Below, we show that the problem of computing metrics is NP-hard, so no asymptotically-faster algorithms can be found.
Moreover, \BDDs are among the most efficient approaches in terms of practical performance \cite{Bry86,BET13}.

\subsection{Computational complexity}
\label{sec:SAT_DAGs:complexity}

We first show why the bottom-up procedure fails to compute metrics for \ATs that have shared subtrees. 


\begin{figure}
  \vspace{-2ex}
  \centering
  \def\caja#1{\parbox[b]{.5em}{\centering$#1$}}

  \begin{subfigure}[b]{.27\linewidth}
	\includegraphics[width=\linewidth]{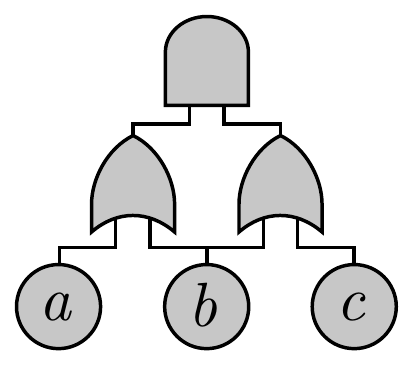}
  \end{subfigure}
  ~
  \begin{subfigure}[b]{.66\linewidth}
	\def\mp#1#2{\begin{minipage}[b]{#1\linewidth}{{#2}}\end{minipage}}
	\mp{.08}{%
		Let:\vspace{6.5ex}
	}
	\mp{.80}{%
	  \begin{align*}
		\attr{\caja{a}} &= 3  & \Vdom    &= \NN_\infty  \\[-.5ex]
		\attr{\caja{b}} &= 1  & \operOR  &= \min        \\[-.5ex]
		\attr{\caja{c}} &= 4  & \operAND &= {+}
	  \end{align*}
	}
	\mp{.08}{\hspace*{-2em}
	\tikz[baseline=-.4ex]{\draw[x=1ex,y=1ex,decorate,decoration={%
		                        brace,mirror,amplitude=5pt}] (0,0)--(0,8.1)
	                      node [midway,xshift=1.3em] {\domain};}}
	
	The cheapest attack is $\{b\}$: $\metrA{\{b\}}=1$.
	\vspace{0.3ex}
  \end{subfigure}
  
  \caption{Metrics cannot be computed bottom-up on \ATs with \DAG structure.
           For min cost in this static \AT, \Cref{alg:bottom_up_SAT} yields:
           $\BUSAT(\T,\ATroot,\attrOp,\domain) = (3\minb1) + (1\minb4) = 2 \neq 1 = \metr{\T}$.
           The miscomputation stems from counting \attr{b} twice.}
  \label{fig:bottom_up_DAG}
\end{figure}

\begin{example}
	\label{ex:bottom_up_DAG}
	\Cref{fig:bottom_up_DAG} shows how the bottom-up approach can fail when applied to \DAG-structured attack trees.
	Intuitively, the problem is that a visit to node $v$ in \Cref{alg:bottom_up_SAT}---or any bottom-up procedure that operates on the \AT structure---can only aggregate information on its descendants.
	So, the recursive call for $v$ cannot determine whether a sibling node in the \AT (i.e.\ any node $v'$ which is not an ancestor nor a descendant of $v$) shares a \BAS descendant with $v$.
	As a result, recursive computations for both $v$ and $v'$ may select a shared descendant $b\in\BAS$, and use \attr{b} in (both) their local computations.
	This causes the miscomputation in \Cref{fig:bottom_up_DAG}.
\end{example}


Workarounds to this issue include keeping track of the \BAS selected at each step by the metric \cite{KW18}, and operating on the \AT semantics \cite{MO06}.
In all cases the worst-case scenario has exponential complexity on the number of \AT nodes:
for \cite{KW18} this is in the input of the algorithm, i.e.\ determining the sets of necessary and optional clones; for \cite{MO06} and our \Cref{def:SAT:metric} the complexity lies in the computation of the semantics.

In general, one cannot hope for faster algorithms: \Cref{theo:NP_hard} shows that the core problem---computing minimal attacks of \DAG-structured attack trees---is NP-hard even in the simplest structure: plain attack trees with \AND/\OR gates.
The proof
\ifthenelse{\boolean{forCSF}}%
{
(provided in full in \cite{arXiv})
}{
(in \Cref{sec:proofs}, \cpageref{theo:NP_hard:proof})
}%
reduces the satisfiability of logic formulae in conjunctive normal form, to the computation of minimal attacks in general \SATs.

\begin{theorem}
	\label{theo:NP_hard}
	The problem of computing the smallest minimal attack
	of a DAG-structured static AT is NP-hard.
\end{theorem}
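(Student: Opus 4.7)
The plan is to reduce Boolean satisfiability of CNF formulae (SAT) to the problem of computing a smallest minimal attack of a DAG-structured static AT built only from $\OR$ and $\AND$ gates; since SAT is NP-complete this yields NP-hardness. The construction encodes variable assignments and clause conjunctions into the AT topology so that the minimum attack size cleanly distinguishes satisfiable from unsatisfiable instances.

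First I would describe the construction. Given a CNF formula $\phi = C_1 \wedge \cdots \wedge C_m$ over variables $x_1, \ldots, x_n$, introduce two $\BAS$ nodes $b_j^{+}$ and $b_j^{-}$ per variable $x_j$, representing the literals $x_j$ and $\neg x_j$ respectively. For each variable, add a \emph{variable gate} $u_j = \OR(b_j^{+}, b_j^{-})$; for each clause $C_i$, add a \emph{clause gate} $v_i$ whose children are exactly the $\BAS$ nodes corresponding to the literals occurring in $C_i$. The root is $\ATroot = \AND(u_1, \ldots, u_n, v_1, \ldots, v_m)$. Since each $b_j^{\pm}$ is a child of its variable gate and of every clause gate whose clause contains the corresponding literal, $\T[\phi]$ is a proper DAG, built in polynomial time from $\phi$.

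Next I would prove the central equivalence: $\phi$ is satisfiable iff the smallest minimal attack of $\T[\phi]$ has size exactly $n$. The variable gates $u_j$ have pairwise disjoint $\BAS$ descendants, so any successful attack contains at least one $\BAS$ per variable and thus has cardinality at least $n$. For the forward direction, from any satisfying assignment $\alpha$ form $\attack_{\alpha} = \{b_j^{+} \mid \alpha(x_j) = \top\} \cup \{b_j^{-} \mid \alpha(x_j) = \bot\}$: this set has size $n$, it succeeds because every $u_j$ and every $v_i$ is satisfied (the latter because $\alpha$ makes $C_i$ true), and it is minimal because dropping any element leaves some $u_j$ unsatisfied. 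Conversely, any successful $\attack$ with $|\attack| = n$ must, by pigeonhole, contain exactly one of $\{b_j^{+}, b_j^{-}\}$ per variable; reading off $\alpha(x_j)$ from this choice gives a truth assignment that satisfies every clause, since every $v_i$ succeeds under $\attack$.

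Consequently, deciding whether the smallest minimal attack of $\T[\phi]$ has size at most $n$ decides SAT, so computing it is NP-hard. The main obstacle — and the reason for introducing the variable gates $u_j$ — is aligning ``minimise the $\BAS$ count'' with ``pick one consistent truth value per variable.'' Without the $u_j$ the instance would reduce to Minimum Hitting Set on the clause sets, which is still NP-hard but whose witness size does not separate satisfiable from unsatisfiable $\phi$ in a way the computation naturally exposes. The $u_j$ gates pin the minimum attack size to exactly $n$ when $\phi$ is satisfiable and force it to exceed $n$ otherwise, which is precisely what makes the reduction work.
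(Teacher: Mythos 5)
Your proposal is correct and is essentially the paper's own reduction: your variable gates $u_j=\OR(b_j^+,b_j^-)$ are the paper's clauses $\hat{a}_k=(a_k\lor\widehat{a}_k)$ with the negative literal replaced by a fresh positive atom, your clause gates are the $\hat{c}_i$, and both arguments hinge on the smallest successful attack having size exactly $n$ iff the formula is satisfiable. Your converse direction (pigeonhole forcing exactly one literal per variable, hence a consistent assignment) is if anything stated more cleanly than the paper's corresponding case analysis.
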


Note that the attribute domain $(\NN_\infty,\min,{+})$ allows for an attribution $\attrOp$, s.t.\ the \BAS that constitute the resulting metric can be extracted in polynomial time from its value.
This observation underpins the following \namecref{coro:NP_hard} of \Cref{theo:NP_hard}:

\begin{corollary}
	\label{coro:NP_hard}
	Computing  a metric for an attribute domain
	in a DAG-structured SAT is NP-hard.
\end{corollary}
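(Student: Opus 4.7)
The plan is to reduce the NP-hard problem established in \Cref{theo:NP_hard} (computing the smallest minimal attack of a \DAG-structured \SAT) to the problem of computing a single metric value. If the reduction is polynomial, NP-hardness transfers to the metric computation.

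First I would instantiate the concrete semiring attribute domain $\domain=(\NN_\infty,\min,+)$, already listed in \Cref{tab:SAT:metric}. Given an input \SAT \T with $\BAS_\T=\{a_0,\ldots,a_{m-1}\}$, I would construct an attribution that injectively encodes every basic attack step via $\attr{a_i}=2^m+2^i$. Each attribute value fits in $O(m)$ bits, so the reduction itself is polynomial in the size of \T.

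Next I would unfold $\metrA{\attack}$ according to \Cref{def:SAT:metric}: for every attack $\attack\in\allAttacks_\T$,
\[
\metrA{\attack}
~=~\sum_{a_i\in\attack}\bigl(2^m+2^i\bigr)
~=~\card{\attack}\cdot 2^m + \sum_{a_i\in\attack}2^i.
\]
Since the second summand is strictly less than $2^m$, two facts follow: (i) whenever $\card{\attack_1}<\card{\attack_2}$ one has $\metrA{\attack_1}<\metrA{\attack_2}$, hence $\metr{\T}=\min_{\attack\in\ssem{\T}}\metrA{\attack}$ is attained by some smallest minimal attack $\attack^\star$; and (ii) $\metr{\T}$ encodes $\attack^\star$ explicitly, because integer division by $2^m$ recovers $\card{\attack^\star}$ and the binary expansion of the remainder is the characteristic vector of $\attack^\star$ over $\BAS_\T$. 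Both extractions are polynomial, so a polynomial-time algorithm computing $\metr{\T}$ would solve the problem of \Cref{theo:NP_hard}, contradicting its NP-hardness (unless $\mathsf{P}=\mathsf{NP}$).

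The main thing to check carefully is bit-size bookkeeping: the metric value is bounded by $m\cdot 2^m + (2^m-1)$, which still fits in $O(m)$ bits, so the encoded numbers, the recursive computation, and the decoding step all remain polynomial in $\card{\ATnodes}$. I do not anticipate any algebraic obstacle beyond this, because the argument hinges entirely on the encoding trick $\attr{a_i}=2^m+2^i$ and on the reduction to \Cref{theo:NP_hard}.
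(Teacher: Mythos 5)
Your proposal is correct and follows exactly the route the paper takes: it reduces the problem of \Cref{theo:NP_hard} to metric computation over the semiring $(\NN_\infty,\min,+)$ via an attribution from which the optimising attack can be decoded in polynomial time. The paper merely asserts that such an attribution exists, whereas you supply the explicit encoding $\attr{a_i}=2^m+2^i$ and verify the bit-size bounds, so your write-up is a faithful (and more detailed) instantiation of the paper's argument.
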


\subsection{Binary decision diagrams}
\label{sec:SAT_DAGs:BDDs}

\BDDs offer an extremely compact representation of Boolean functions, whose size can grow linearly in the number of variables, i.e.\ the \BAS of the \AT \cite{Bry86}.
Although this depends on the variable ordering, and there exist functions where every \BDD is of exponential size, \DAG-structures that represent Boolean functions---such as fault trees and \ATs---often have small \BDD encodings \cite{BET13,RD97}.

A \BDD is a rooted \DAG \bddf that, intuitively, represents a Boolean function $f\from \BB^n\to\BB$ over variables $\Vars=\{x_i\}_{i=1}^n$.
The terminal nodes of \bddf represent the outcomes of $f$: $\bot$ or $\top$.
A nonterminal node $w\in\BDDnodes$ represents a subfunction $f_w$ of $f$ via its Shannon expansion.
That means that $w$ is equipped with a variable $\BDDlab(w)\in\Vars$ and two children:
$\low(w)\in\BDDnodes$, representing $f_w$ in case that the variable $\BDDlab(w)$ is set to $\bot$;
and $\high(w)$, representing $f_w$ if $\BDDlab(w)$ is set to $\top$.


\begin{definition}
	\label{def:BDD}
	A \emph{BDD} is a tuple $\bddT[]=(\BDDnodes,\low,\high,\BDDlab)$
	over a set \Vars where:
	\begin{itemize}[topsep=.5ex,parsep=.1ex,itemsep=0pt]
    \item	The \emph{set of nodes} \BDDnodes is partitioned into
			terminal nodes (\BDDnodesT) and
			nonterminal nodes (\BDDnodesN);
    \item	$\low    \from \BDDnodesN \to \BDDnodes$
			maps each node to its \emph{low child};
    \item	$\high   \from \BDDnodesN \to \BDDnodes$
			maps each node to its \emph{high child};
    \item	$\BDDlab \from \BDDnodes   \to \{\bot,\top\}\cup\Vars$
			maps terminal nodes to Booleans,
			and nonterminal nodes to variables:\\[.5ex]
			${\BDDlab(w)\in\begin{cases}
			\{\bot,\top\} & \text{if}~w\in\BDDnodesT,\\
			\Vars         & \text{if}~w\in\BDDnodesN.
			\end{cases}}$
	\end{itemize}
	Moreover, \bddT[] satisfies the following constraints:
	\begin{itemize}[topsep=.5ex,parsep=.1ex,itemsep=0pt]
	\item	$(\BDDnodes,\mathobject{E})$ is a connected \DAG, where\\
			\phantom{.}\hfill%
			$\mathobject{E}=\{(w,w')\in\BDDnodes^2 \mid w'\in\low(w)\cup\high(w)\}$;
	\item	\bddT[] has a unique root, denoted \BDDroot:\\
			\phantom{.}\hfill%
			$\exists!\,\BDDroot\in \BDDnodes.~%
			\forall w\in\BDDnodesN.~\BDDroot\not\in\low(w)\cup\high(w)$.
	\end{itemize}
\end{definition}

\paragraph{Reduced ordered BDDs}
We operate with \emph{reduced ordered BDDs}, simply denoted \BDDs.
This requires a total order ${<}$ over the variables.
For \Cref{def:BDD} this means that:
\begin{itemize}[label=\textbullet]
\item	\Vars comes equipped with a total order,
		so \bddf is actually defined over a pair \poset[\Vars][{<}];
\item	the variable of a node is of lower order than its children:
		$\forall\,w\in\BDDnodesN.\,%
			\BDDlab(w)<\BDDlab(\low(w)),\BDDlab(\high(w))$;
\item	the children of nonterminal nodes are distinct nodes;
\item	all terminal nodes are distinctly labelled.
\end{itemize}
This has the following consequences in the \BDD:
\begin{itemize}[label=\textbullet]
\item	there are exactly two terminal nodes:
		$\BDDnodesT=\{\oldbot,\!\oldtop\}$,
		with ${\BDDlab(\oldbot)=\bot}$ and ${\BDDlab(\oldtop)=\top}$;
\item	the label of the root node \BDDroot has the lowest order;
\item	in any two paths from \BDDroot to $\oldbot$ or $\oldtop$,
		the variables appear in the same (increasing) order.
\end{itemize}

\paragraph{Encoding static \ATs as \BDDs}
The key idea behind \BDDs is that evaluating a Boolean function $f$ on an input ${\bfx=(x_1,\ldots,x_n)\in\BB^n}$ is equivalent to following the corresponing path from \BDDroot to a terminal node:
when visiting node $w\in\BDDnodesN$ with $x_i=\BDDlab(w)$, the path goes to the child $\low(w)$ if $x_i=\bot$ in \bfx; else it goes to $\high(w)$.
The result $f(\bfx)\in\BB$ is the label of the terminal node reached.

This is used to encode fault trees as \BDDs via their structure function \cite{Rau93}, and extends to \ATs by letting ${\BAS=\Vars}$.
Technically, this exploits the Boolean function $\bfx\mapsto\sfunT(\attack_{\bfx})$, where the attack $\attack_{\bfx}$ contains the \BAS in whose position (determined by the total order ${<}$) the input $\bfx$ is $\top$.

Finally and importantly, since the metrics are defined on the set of minimal attacks of an AT \T, 
the \BDD \bddT must   exclusively represent the minimal attacks in \T.
This is achieved by using a variant of the Shannon expansion of the structure function \sfunT \cite{RD97}, which evaluates to $\top$ only when including the \BAS which are essential for the current attack under consideration.
Formally:
$
	\bfx \mapsto
		\big(x_1 \land f(\bfx_1) \land \neg f(\overbar{\bfx_1}) \big)
		\lor
		\big(\neg \overbar{x_1} \land f(\overbar{\bfx_1}) \big),
$
where one has $\bfx_1 \doteq (\top,x_2,\ldots,x_n)$ and $\overbar{\bfx_1} \doteq (\bot,x_2,\ldots,x_n)$.

\medskip%
\noindent%
\begin{minipage}{.72\linewidth}
\begin{example}
	\label{ex:SAT:BDD}
	Let $n<t<p$ in \sampleTs from \Cref{ex:running_examples}: the resulting \BDD $(\bddT[\sampleTs])$ is illustrated to the right.
	As usual, the children of a node appear below it (so the root node is on top), and a dashed line from $w$ to a child $w'$ means that $w'=\low(w)$, and a solid line means that $w'=\high(w)$.
\end{example}
\end{minipage}
\quad
\begin{minipage}{.2\linewidth}
\includegraphics[width=.95\linewidth]{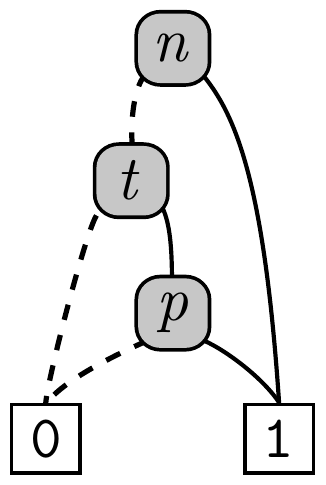}
\end{minipage}
\medskip


\subsection{BDD-based algorithm for DAG-structured SATs}
\label{sec:SAT_DAGs:algorithm}

\Cref{alg:bottom_up_BDD} computes metrics for \DAG-structured trees given an attribute domain $\ndomain=(\Vdom,\operOR,\operAND,\ntOR,\ntAND)$, where ${\ntOR,\ntAND\in\Vdom}$ are neutral elements for $\operOR$ and $\operAND$:
$\forall x\in\Vdom.~\ntOR \operOR x = \ntAND \operAND x = x$.
Also and just like \BUSAT, \Cref{alg:bottom_up_BDD} requires $(\Vdom,\operOR,\operAND)$ in \ndomain to be a semiring attribute domain.

It is common for semiring definitions to require the presence of neutral elements \cite{Mac71}: they are needed for \DAG-structured \SATs, although not for tree-structured \SATs.
Examples of neutral elements in \Cref{tab:SAT:metric} are $\ntOR=\infty$ and $\ntAND=0$ for min cost, and $\ntOR=0$ and $\ntAND=1$ for (max) discrete probability.

\paragraph{The algorithm}
The idea behind \Cref{alg:bottom_up_BDD} is to traverse the \BDD top-down (or, equivalently, bottom-up), accumulating via $\operAND$ the values of the \BASs included in the attack under consideration.
For that, at each node $w$ visited in the \BDD \bddT, \BUBDD recursively computes the metric value \metr{\T[w]} for the \AT whose \BDD \bddT[{\T[w]}] is the sub-\BDD of \bddT with root $w$.
So, starting at the root \BDDroot of the \BDD \bddT, algorithm \BUBDD considers the only two possible types of attack:
\begin{itemize}[leftmargin=1em]
\item	Those that include $\BDDlab(\BDDroot)=v\in\BAS$:
		\begin{itemize}
		\item	the metric for this suite of attacks is computed in a
				recursive call of \BUBDD on the child $\high(\BDDroot)=h$;
		\item	these attacks use $v\in\BAS$ so their metrics use
				${\attr{v}\in\Vdom}$, accumulated via $\operAND$
				(which distributes over $\operOR$);
		\item	the result is $\metr{\T[h]}\operAND\attr{v}\in\Vdom$,
				where \bddT[{\T[h]}] represents the suite of attacks
				of \T that require $v$ to succeed.
		\end{itemize}
\item	Those that exclude $\BDDlab(\BDDroot)$:
		\begin{itemize}
		\item	the metric is computed by recursion on $\low(\BDDroot)=\ell$;
		\item	these attacks exclude $v$ and therefore do not use \attr{v};
		\item	the result is $\metr{\T[\ell]}\in\Vdom$,
				where \bddT[{\T[\ell]}] represents the suite of successful
				attacks	of \T that exclude $v$.
		\end{itemize}
\item	The final metric for \T is the disjunction
		of these the two recursive calls:
		$\metr{\T[\ell]} \operOR \big( \metr{\T[h]}\operAND\attr{v} \big)$.
\item	The base cases of the recursive calls are the \BDD leaves:
		\begin{itemize}
		\item	$\BDDlab(\oldbot)=\bot$ is given the neutral element
				$\ntOR\in\Vdom$;
		\item	$\BDDlab(\oldtop)=\top$ is given the neutral element
				$\ntAND\in\Vdom$.
		\end{itemize}
%
\end{itemize}
The pseudocode of this procedure is given as \Cref{alg:bottom_up_BDD}.


\begin{algorithm}
	\KwIn{\BDD $\bddT=(\BDDnodes,\low,\high,\BDDlab)$,\newline
	      node $w\in\BDDnodes$,\newline
	      attribution $\attrOp$,\newline
	      \mbox{semiring attribute domain 
	      $\ndomain=(\Vdom,\operOR,\operAND,\ntOR,\ntAND)$.}}
	\KwOut{Metric value $\metr{\T}\in\Vdom$.}
	\BlankLine
	\uIf{$\BDDlab(w)=\bot$}{%
		\Return{\ntOR}
	} \uElseIf{$\BDDlab(w)=\top$}{%
		\Return{\ntAND}
	} \Else(\tcp*[h]{$w\in\BDDnodesN$}) {%
		\Return{${\BUBDD(\bddT,\low(w),\attrOp,\ndomain)} \operOR
			{\big(\BUBDD(\bddT,\high(w),\attrOp,\ndomain) \operAND
				\attr{\BDDlab(w)}\big)}$}
	}
	\caption{\BUBDD for a \DAG-structured \SAT \T}
	\label{alg:bottom_up_BDD}
\end{algorithm}

\medskip
  
\noindent%
\begin{minipage}{.72\linewidth}
\begin{example}
	\label{ex:SAT:bottom_up_BDD}
	For the \DAG-structured \SAT shown in \Cref{fig:bottom_up_DAG}, the order ${b<a<c}$ of its \BAS yields the \BDD to the right.
	To compute the min cost (like in \Cref{fig:bottom_up_DAG}) we employ the attribution $\attrOp=\{{a\mapsto3},{b\mapsto1},{c\mapsto4}\}$ and the domain $(\NN_\infty,\min,+)$.
	Moreover, to use \Cref{alg:bottom_up_BDD}, we choose the neutral elements 
\end{example}
\end{minipage}
\quad
\begin{minipage}{.2\linewidth}
\includegraphics[width=.95\linewidth]{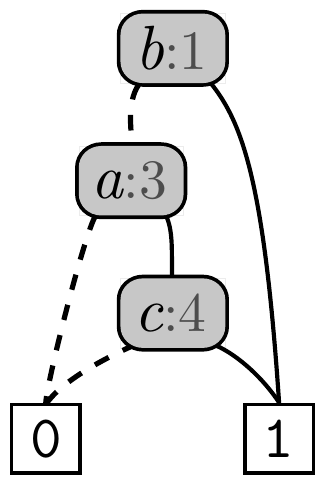}
\end{minipage}
\par\vspace{.6ex plus .25ex minus .15ex}\noindent%
$\ntOR=\infty$ for $\min$ and $\ntAND=0$ for ${+}$, i.e.\ we use the attribute domain $\ndomain=(\NN_\infty,\min,{+},\infty,0)$.
Let the nonterminal nodes of the \BDD \bddT be $\BDDnodesN=\{w_a,w_b,w_c\}$.
For $w\in\BDDnodes$ let $\BU(w)=\BUBDD(\bddT,w,\attrOp,\ndomain)$, then we compute the metric:

\vspace{-2ex}
\begingroup
\allowdisplaybreaks
\begin{align*}
  \BU(\BDDroot)
	&= \BU(w_a) \minb \big(\BU(\oldtop)+\attr{b})\\
	&= \BU(w_a) \minb \big(\ntAND+1)\\
	&= \BU(w_a) \minb 1\\
	&= \big(\BU(\oldbot) \minb (\BU(w_c)+\attr{a}) \big) \minb 1\\
	&= \big(\ntOR \minb (\BU(w_c)+3) \big) \minb 1\\
	&= \big(\BU(w_c)+3 \big) \minb 1\\
	&= \big((\BU(\oldbot)\minb(\BU(\oldtop)+\attr{c}))+3 \big) \minb 1\\
	&= \big((\ntOR\minb(\ntAND+4))+3 \big) \minb 1\\
	&= (4+3) \minb 1 ~=~ 1.
\end{align*}
\endgroup
\vspace{-2ex}

To compute instead the (max) discrete probability we use the attribution $\attrOp'=\{{a\mapsto0.1},{b\mapsto0.05},{c\mapsto0.6}\}$ and the attribute domain $\ndomain'=({[0,1]_\QQ},\max,\ast,0,1)$.
Then computations are as before until the last line, which here becomes: ${(\attrOp'(c)\ast\attrOp'(a))\maxb\attrOp'(b)} = {(0.6\ast0.1)\maxb0.05} = 0.06$.

\Cref{theo:bottom_up_BDD} states the correctness of \Cref{alg:bottom_up_BDD}, i.e.\ that it yields the metric for a static \AT given in \Cref{def:SAT:metric} regardless of its structure.
We prove \Cref{theo:bottom_up_BDD} in
\ifthenelse{\boolean{forCSF}}%
{
\cite{arXiv}, by induction in the number of levels of the \BDD \bddT: this is the cardinality of its set of nodes \BDDnodes, whose labels are the \BASs of \T.
Our proof relies on the fact that the leaf $\oldtop$ in \bddT cannot be the $\low$ child of a node, and analogously $\oldbot$ cannot be a $\high$ child.
The intuition behind this is that visiting $\low(w)$ symbolises the act of excluding the node $\BDDlab(w)\in\BAS$ from an attack.
Since static \ATs are coherent, \emph{excluding a BAS cannot make an attack succeed.}
Therefore, taking the $\low$ child of $w$ cannot lead to $\oldtop$; the reasoning for $\oldbot$ and $\high$ is analogous.
}{
\Cref{sec:proofs}, \cpageref{theo:bottom_up_BDD:proof}.
}%

\begin{theorem}
	\label{theo:bottom_up_BDD}
	\def\root{\ensuremath{\BDDroot[{\bddT}]}\xspace}
	\def\IC#1{\ensuremath{\mathrm{IC}_{#1}}\xspace}
	Let \T be a static \AT, \bddT its \BDD encoding over \poset[\BAS][{<}],
	$\attrOp$ an attribution on \Vdom,
	and $\ndomain=(\Vdom,\operOR,\operAND,\ntOR,\ntAND)$ an attribute domain
	with neutral elements resp.\ for $\operOR$ and $\operAND$.
	Then $\metr{\T} = \BUBDD(\bddT,\root,\attrOp,\ndomain)$.
\end{theorem}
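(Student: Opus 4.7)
The plan is to strengthen the statement so an induction on the BDD structure goes through cleanly. For every node $w$ of $\bddT$, let $\mathcal{S}(w) \subseteq \allAttacks_{\T}$ denote the attack suite encoded by the paths from $w$ to $\oldtop$: each such path contributes the set of variables whose $\high$-edge is traversed. The variant Shannon expansion recalled in \Cref{sec:SAT_DAGs:BDDs} was designed precisely so that $\mathcal{S}(\BDDroot[{\bddT}]) = \ssem{\T}$; the side-properties that $\oldtop$ never appears as a $\low$-child and $\oldbot$ never as a $\high$-child follow from coherence of \SATs and guarantee this identification. The stronger claim I would prove, by induction on the number of nodes of the sub-BDD rooted at $w$, is
\[
  \BUBDD(\bddT, w, \attrOp, \ndomain) ~=~ \bigoperOR_{A \in \mathcal{S}(w)} \bigoperAND_{a \in A} \attr{a}.
\]
The theorem then follows by instantiating at $w = \BDDroot[{\bddT}]$.

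\textbf{Induction.} The two base cases fall out of the definitions. If $w = \oldbot$, no path reaches $\oldtop$, so $\mathcal{S}(w) = \emptyset$ and the empty $\operOR$ is $\ntOR$, which is precisely what the algorithm returns. If $w = \oldtop$, the unique empty path gives $\mathcal{S}(w) = \{\emptyset\}$, whose metric is the empty $\operAND$, namely $\ntAND$; again this matches the algorithm. For the step, fix $w \in \BDDnodesN$ with $v = \BDDlab(w)$, $\ell = \low(w)$, $h = \high(w)$. Every path from $w$ to $\oldtop$ either starts with the $\low$-edge (extending a path of $\mathcal{S}(\ell)$ without using $v$) or with the $\high$-edge (extending a path of $\mathcal{S}(h)$ with $v$ prepended). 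The BDD ordering ensures $v \notin B$ for all $B \in \mathcal{S}(h)$, so the decomposition
\[
  \mathcal{S}(w) ~=~ \mathcal{S}(\ell) \,\cup\, \bigl\{\{v\} \cup B \mid B \in \mathcal{S}(h)\bigr\}
\]
is \emph{disjoint}. Splitting the outer $\bigoperOR$ along this partition (commutativity and associativity of $\operOR$), factoring $\attr{v}$ out of each summand in the second half (associativity of $\operAND$), and finally pulling $\attr{v}$ outside the inner $\bigoperOR$ via distributivity of $\operAND$ over $\operOR$---exactly where the semiring hypothesis is used---yields
\[
  \bigoperOR_{A \in \mathcal{S}(\ell)} \bigoperAND_{a \in A} \attr{a} ~\operOR~ \Bigl(\,\attr{v} \operAND \bigoperOR_{B \in \mathcal{S}(h)} \bigoperAND_{b \in B} \attr{b}\Bigr).
\]
Applying the inductive hypothesis at $\ell$ and $h$, together with commutativity of $\operAND$, gives exactly the value the algorithm returns at $w$.

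\textbf{Main obstacles.} The subtler points are two. First, the bridge $\mathcal{S}(\BDDroot[{\bddT}]) = \ssem{\T}$ is not visible from the BDD definition alone; it needs the variant Shannon expansion of \Cref{sec:SAT_DAGs:BDDs} and coherence of static \ATs, which together rule out the degenerate terminal placements noted above and are what make the BDD encode \emph{minimal} attacks rather than arbitrary successful ones. Second, the disjointness of the two pieces of $\mathcal{S}(w)$ in the inductive step is indispensable, because a general semiring $\operOR$ need not be idempotent and any double-counting would distort the value. This disjointness is precisely what the total variable ordering of a reduced ordered BDD buys us, so the argument depends essentially on working with an ordered BDD rather than an arbitrary one.
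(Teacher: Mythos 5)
Your proof is correct, and the key algebraic move---splitting the attacks through the root variable of the \BDD and using distributivity of $\operAND$ over $\operOR$ to push $\attr{\BDDlab(w)}$ into the recursive value of the $\high$ branch---is the same one the paper uses. Where you genuinely diverge is in how the induction is set up. The paper inducts on the number of \BAS (equivalently, \BDD levels), peels off the lowest-ordered variable $a_0$, and identifies the two sub-\BDDs under the root as encodings of \emph{attack trees} $\T[L]$ and $\T[H]$ over the remaining \BAS, so that the induction hypothesis is the theorem itself applied to those cofactor \ATs; this forces it to argue semantically that $\metr{\T[L]}$ is the metric of the attacks of \T excluding $a_0$, and it is here that the paper leans on coherence (the leaf $\oldtop$ is never a $\low$ child, $\oldbot$ never a $\high$ child) to keep the sub-\BDDs well-behaved. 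You instead strengthen the invariant to an arbitrary node $w$ via the path-suite $\mathcal{S}(w)$ and induct on the size of the sub-\BDD, which decouples the correctness of \BUBDD as a path-summing procedure from the correctness of the encoding; the single semantic fact $\mathcal{S}(\BDDroot[{\bddT}])=\ssem{\T}$ is then isolated at the end, exactly where the paper also defers to the variant Shannon expansion of \Cref{sec:SAT_DAGs:BDDs}. Your version buys a cleaner and more honest inductive step (no need to claim the children of the root are themselves \AT encodings, and the disjointness of the two halves of $\mathcal{S}(w)$ is justified purely by the variable ordering), at the price of making the bridge to \ssem{\T} an explicit unproved premise---which is a fair trade, since the paper treats that premise the same way. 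One small point worth stating explicitly if you write this up: distinct paths from $w$ to $\oldtop$ yield distinct attack sets (at the first divergence one path takes the $\high$ edge of some variable and the other the $\low$ edge), so $\mathcal{S}(w)$ as a \emph{set} really is what the algorithm sums over and no double-counting can occur even though $\operOR$ need not be idempotent.
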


\paragraph{\BDDs to compute semantics}
The \BDD encoding of a static \AT \T can also be used to compute \ssem{\T}.
Consider a path $\pi=a_1\overbar{a_2}\cdots a_\ell$ from the root of \bddT to its $\oldtop$-leaf:
$a_i$ (resp.\ $\overbar{a_i}$) denotes that $\pi$ goes to the $\high$ (resp.\ $\low$) child of the \BDD node labelled with $a_i\in\BAS$.
Then $\pi$ represents a successful attack $\attack \doteq \{ a_i\in\BAS \mid a_i~\text{appears positive in}~\pi \} \in \allAttacks_{\T}$.
To compute all successful attacks:
\begin{enumerate*}
\item	find all distinct paths $\{\pi_j\}_{j=1}^n$ in the graph of
		\bddT, from its root node to its $\oldtop$-leaf;
\item	let $A_j=\{\text{positive}~\BAS~\text{in}~\pi_j\}$.
\end{enumerate*}
Providing that \bddT encodes minimal attacks only, the result is $\{A_j\}_{j=1}^n=\ssem{\T}$.

\subsection{Computing the $k$-top metric values}

The approach described above can be extended to efficiently compute the $k$-top values for a given metric.
This problem asks not only the min/max value of the metrics from \Cref{tab:SAT:metric}, but also the next $k-1$ min/max values, e.g.\ the cost of the $k$ cheapest attacks, or the probability of the $k$ most likely ones.

Such $k$-top values can be computed by weighing the $\high$ edges of the \BDD with their corresponding (source-) \BAS attributes, and finding the $k$-shortest-weighted paths from the root of the \BDD to its $\oldtop$-leaf.
We present this idea as \Cref{alg:shortest_path_BDD}.

\Cref{alg:shortest_path_BDD} relies on an implementation of $\kshortest$: the $k$-shortest-paths algorithm for \DAGs.
This is a well-known extension of the Dijkstra (or Thorup) algorithm \cite{Dij59,Tho99}.
For a \DAG $G$ with edges weighted by the matrix $Q$, $\kshortest(G,Q,s,t,k,\circ)$ returns the weight of the $k$-shortest paths from a (source) node $s$ of $G$, to a (target) node $t$, using operator $\circ$ to accumulate weight.


\begin{algorithm}
	\def\matrix{\mathobject{Q}}
	\def\sgn{\ensuremath{\mathit{sgn}}}
	\KwIn{\BDD $\bddT=(\BDDnodes,\low,\high,\BDDlab)$,\newline
	      number of values to compute $k\in\NN$,\newline
	      attribute domain $\domain=(\Vdom,\operOR,\operAND)$,\newline
	      attribution $\attrOp$.}
	\KwOut{$k$-top metric values of \T for $\attrOp$ and $\domain$.}
	\BlankLine
	\matrix\ := $0$-filled $\card{\BDDnodes}\times\card{\BDDnodes}$ matrix
		\\ 
	\leIf(\tcp*[f]{$\operOR=\max$})
		{$\operOR=\min$}  
		{\sgn\ := $1$}    
		{\sgn\ := $-1$}   
	\ForEach{nonterminal node $w\in\BDDnodesN$}{%
		$\matrix[w][\high(w)]$ := $\sgn\ast\attr{\BDDlab(w)}$\\
	}
	\Return{$\sgn\ast\kshortest(\bddT,\matrix,\BDDroot[\bddT],\oldtop,k,\operAND)$}
	\caption{$\mathtt{k\_top}$ metric values for a \SAT \T}
	\label{alg:shortest_path_BDD}
\end{algorithm}

\Cref{alg:shortest_path_BDD} works for $\operOR\in\{\min,\max\}$, using a sign change to compute max-top values, in which case the implementation of $\kshortest$ must support negative weights.
The correctness of the \namecref{alg:shortest_path_BDD} is a direct consequence of the (correct) encoding of the minimal attacks of \T by the \BDD \bddT, and the $\kshortest$ algorithm.

\begin{example}
	\label{ex:k-top_values}
	Consider the \DAG-structured \SAT from \Cref{fig:bottom_up_DAG}, $\T={\AND\big(\OR(a,b),\OR(b,c)\big)}$.
	To compute its 2 cheapest attacks under the attribution $\attrOp = \{{a\mapsto3}, {b\mapsto1}, {c\mapsto4}\}$, let $b<a<c$ s.t.\ \bddT is as in \Cref{ex:SAT:bottom_up_BDD}.
	The $\low$ edge of the root $b$ (that encodes ``not performing $b$'') is labelled with cost $0$, and the $\high$ edge with cost $\attr{b}=1$; the same is done for $a$ and $c$.
	Then the shortest-weight path from the root of \bddT to its $\top$-labelled leaf is $\pi_1=b$, which yields the cheapest attack $\attack_1=\{b\}$ with cost $\metrA{\attack_1}=\attr{b}=1$.
	Second to that we find the path $\pi_2=\overline{b}ac$, which yields the second-cheapest attack $\attack_2=\{a,c\}$ with cost $\metrA{\attack_2}=\attr{a}\operAND\attr{c}=3+4=7$.
\end{example}

\section{Analysis of Dynamic Attack Trees}
\label{sec:DAT}

In the presence of \SAND gates, the execution order of the \BAS becomes relevant.
This affects primarily the semantics, i.e.\ what it means to perform a successful attack, but also security metrics become sensitive to the sequentiality of events.

\subsection{Partially-ordered attacks and well-formedness}
\label{sec:DAT:wellformed}

As for the static case, the semantics of a dynamic attack tree (\DAT) is defined by its successful attack scenarios.
However, \DATs necessitate a formal notion of order, because a sequential gate $\SAND(v_1,\ldots,v_n)$ succeeds only if every $v_i$ child is completely executed before $v_{i+1}$ starts.


Such constructs model dependencies in the order of events.
E.g.\ in H\aa{}stad's broadcast attack, $n$ messages must first be intercepted, from which an $n$-th root (the secret key) may be computed.
In this standard \emph{ordered interpretation}, an activated \BAS is uninterruptedly completed.
This rules out constructs that introduce circular dependencies such as $\SAND(a,b,a)$.\!%
\footnote{\,Cf.\ Kumar et al. (2015), who separates activation from execution of a \BAS and can therefore operate with $\SAND(a,b,a)$ \cite{KRS15}.}

Therefore, an attack scenario that operates with \SAND gates is not just a set $\attack\subseteq\BAS$, but rather a partially-ordered set: a \emph{poset} \poset, where $a\prec b$ indicates that $a\in\attack$ must be carried out strictly before $b\in\attack$. 
Incomparable basic attack steps can be executed in any order, or in parallel.

Thus, the attack \poset indicates that all \BAS in \attack must be executed, and their execution order will respect ${\prec}$\,.
This succinct construct can represent combinatorially many execution orders of \BAS.
For instance \poset[\{a,b\}][\{(a,a),(b,b)\}] allows three executions: the sequence $(a,b)$, and $(b,a)$, and the parallel execution $a\|b$.
Instead, \poset[\{a,b\}][\{(a,a),(b,b),(a,b)\}] only allows the execution sequence $(a,b)$.

Partial orders are reflexive and transitive, so for instance $\SAND(a,b,c)$ gives rise to ${\prec}=\{(a,a),(b,b),(c,c),$ $(a,b),(b,c),(a,c)\}$.
We use an abbreviated notation that depicts their transitive reduction,
so the previous case becomes $\{{a\prec b},{b\prec c}\}$.
This is a textual equivalent to the (unique) Hasse diagram that represents the poset.

\begin{example}
	\label{ex:DAT:poset}
	Consider the dynamic attack tree from \Cref{fig:AT:example:dynamic}: $\sampleTd=\OR\big(\AND(\ff,\ww),\SAND(\ww,\cc)\big)$.
	The posets \poset[\{\ww,\cc\}][\{\ww\prec\cc\}] and \poset[\{\ff,\ww\}][\emptyset] are attack scenarios for \sampleTd, where ${\prec}=\emptyset$ in the latter implies that \ff and \ww can be executed in any order, even in parallel.
	So this poset represents (among others) the \BAS execution sequence $(\ww,\ff)$, which results in a \TLA of \sampleTd.
	Similarly, the poset \poset[\{\ww,\cc\}][\emptyset] allows the execution sequence $(\cc,\ww)$: this violates the gate $\SAND(\ww,\cc)$ so it cannot be considered a valid attack for \sampleTd.
\end{example}

Since successful attacks \poset must ensure all  the sequential orders imposed by \SAND gates, it is possible to 
express infeasible requirements. For example, $\SAND(a,b,a)$ indicates that $a$ must precede $b$, and $b$ must precede $a$. 
To rule out these cases, we operate with well-formed \DATs only. 
A \DAT is well-formed if, for every $\SAND(v,v')$, all the \BASs below $v$ are executed before any of the \BASs below $v'$. 

\begin{definition}[Well-formedness]
	\label{def:wellformed}
	The \emph{BAS descendants} of a node $v\in\ATnodes$
	are 
	$\desc(v)=\{v\}$ if $\typOp(v)=\tBAS$, and
	$\desc(v)=\bigcup_{u\in\chOp(v)}\desc(u)$ otherwise.
	The \emph{ordering graph} of \T is the directed graph
	${\ograph{\T}=\big(\BAS_{\T},\before[\T]\big)}$ s.t.\ $a\before[\T]b$
	iff there is a \SAND gate $v=\SAND(v_1,\ldots,v_n)$ with 
	$a\in\desc(v_i)$ and $b\in\desc(v_{i+1})$ for some $0<i<n$. 
	\T is \emph{well-formed} if \ograph{\T} is acyclic; otherwise
	\T is \emph{ill-formed}.
\end{definition}

\begin{example}
	\label{ex:wellformed}
	\Cref{fig:wellformed:invalid} presents two ill-formed dynamic \ATs: \T[1] and \T[2].
	In contrast, \T[3] (\Cref{fig:wellformed:valid}) and \sampleTd (\Cref{fig:AT:example:dynamic}) are examples of well-formed dynamic attack trees.
\end{example}
	
	Our well-formedness criterion can rule out \DATs for which successful sequential attacks do exist.
	In \Cref{fig:wellformed:invalid}, the execution sequence $(a,b)$ makes the \TLA of \T[2] succeed.
	But \T[2] is a modelling error under our ordered interpretation of \SAND gates, because its subtree $\SAND(b,a)$ indicates that $b$ must be completed to enable $a$.
	Nevertheless, such execution makes sense under an interpretation of the \OR gate that allows the parallel execution of both children, and sees who finishes first. 
	To cover these cases, future work can relax our assumptions.

\begin{figure}
  \vspace{-1ex}
  \centering
  \begin{subfigure}[t]{.4\linewidth}
	\includegraphics[height=2cm]{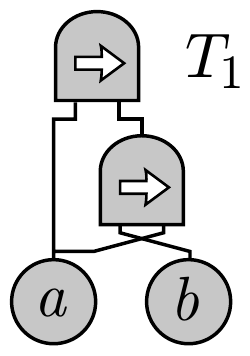}
	~
	\includegraphics[height=2cm]{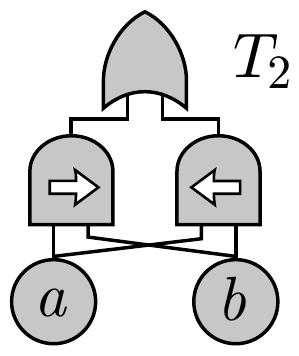}
	\caption{Ill-formed \ATs}
	\label{fig:wellformed:invalid}
  \end{subfigure}
  \qquad
  \begin{subfigure}[t]{.3\linewidth}
	\centering
	\includegraphics[height=2cm]{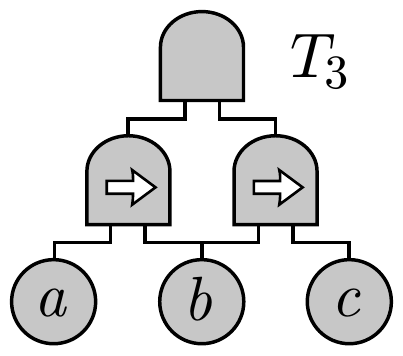}
	\caption{Well-formed \AT}
	\label{fig:wellformed:valid}
  \end{subfigure}
  ~\\[1.5ex]
  \begin{subfigure}[t]{\linewidth}
	\centering
	\includegraphics[width=.35\linewidth]{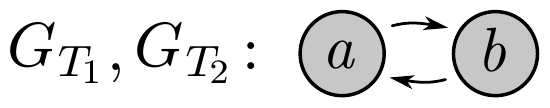}
	\quad~
	\includegraphics[width=.35\linewidth]{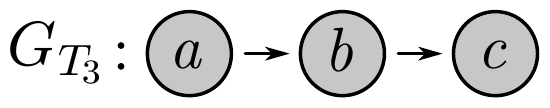}
	\caption{Ordering graphs, with transitive reduction of $\before$}
	\label{fig:wellformed:ographs}
  \end{subfigure}
  \caption{Well-formedness of dynamic Attack Trees}
  \label{fig:wellformed}
  \vspace{-3ex}
\end{figure}

\subsection{Semantics for dynamic attack trees}
\label{sec:DAT:semantics}

The transitive reduction of
the ordering graph \ograph{\T} is a Hasse diagram, that represents the poset of all \BAS nodes and \SAND gates of \T.
This matches the notion of poset that has been intuitively introduced as an attack, and that we formalise in \Cref{def:DAT:attack}.
This \namecref{def:DAT:attack} also lifts the \emph{successful} and \emph{minimal attacks} of \SATs to the category of posets.
The resulting notion of dynamic attack, which underpins our \DAT semantics, can thus be seen as an extension of the standard concepts for \SATs, conservative w.r.t.\ our notion of well-formedness (\Cref{def:wellformed}).

Interestingly, well-formedness plus the structure function of static \ATs suffices to define successful attacks in \DATs: these must
(1) respect all \SAND gates, and
(2) be successful in the corresponding static \AT, obtained by transforming \SAND gates into \AND{s}.
As a consequence, we need not introduce a new structure function for dynamic \ATs:

\begin{definition}[Attacks in dynamic \ATs]
	\label{def:DAT:attack}
	Let \T be a well-formed \DAT with ordering graph $\ograph{}=(\BAS,\before)$:
	\begin{itemize}[topsep=.5ex,parsep=.1ex,itemsep=0pt]
	\item	An \emph{attack scenario}, or shortly an \emph{attack},
			of \T is a poset \poset s.t.\ $\attack\subseteq\BAS$,
			and ${\prec}={\before}\restriction_{\attack}$
			restricts the edge relation $\before$ to \attack, i.e.\
			$\forall a,b\in \attack.\,(a\prec b)\Leftrightarrow(a\before b)$;
	\item	An attack \poset is \emph{successful} if
			$\sfunT[\T'](\attack)=\top$, where $\sfunT[\T']$ is the
			structure function of the \SAT $\T'$, which is obtained by
			replacing every \SAND gate in \T by an \AND;
	\item	A successful attack \poset is \emph{minimal}
			if both $\attack$ and $\prec$ are minimal,
			i.e.\ $\nexists\,\text{successful}\:\poset[\attack'][\prec'].\,
			(\attack'\subsetneq\attack) \lor ({\prec}\subsetneq{\prec'})$.
	\end{itemize}
\end{definition}

\begin{example}
	\label{ex:DAT:attack}
	The ordering graph of the dynamic attack tree \sampleTd from \Cref{fig:AT:example:dynamic} 
has the single edge $\ww\before\cc$.
	Therefore, three successful attacks for \sampleTd are: \poset[\{\ww,\cc\}][\{\ww\prec\cc\}], \poset[\{\ff,\ww\}][\emptyset], and \poset[\{\ff,\ww,\cc\}][\{\ww\prec\cc\}].
	The first two are minimal attacks.
	Instead, the attack \poset[\{\ff,\cc\}][\emptyset] is not successful, and the poset \poset[\{\ww,\cc\}][\{\cc\prec\ww\}] is not an attack since $(\cc,\ww) \in {\prec}\setminus{\before}$, where $\setminus$ denotes set difference.
\end{example}

In minimising also over the partial order ${\prec}$\,, \Cref{def:DAT:attack} makes minimal attacks the least restrictive in terms of sequential dependencies.
Moreover, an \emph{attack suite} \suite of a dynamic \AT \T is a set of attacks, just like for \SATs.
Also $\allAttacks_{\T}$ denotes the universe of attacks of \T, and $\allSuites_{\T}$ its universe of attack suites.

Unlike for \SATs, however, the execution order imposed by \SAND gates makes dynamic \ATs \emph{non-coherent} in general.
Consider ${\SAND(a,\OR(b,c))}$, where \poset[\{a,b\}][\{a\prec b\}] is a successful attack but \poset[\{a,b,c\}][\{c\prec a,a\prec b\}] is not, even though the latter extends the former with ${c\in\BAS}$.

Coherence is a desired property: it means that all successful attacks of a tree are characterised by its minimal attacks.
To maintain this property in the presence of \SAND gates, \Cref{def:DAT:attack} forces the partial order of an attack \poset to be a restriction (to \attack) of the edge relation of the corresponding ordering graph.
Posets that either \emph{omit a required execution order} (e.g.\ the last one in \Cref{ex:DAT:poset}), or \emph{add an invalid execution order} (e.g.\ the last one in \Cref{ex:DAT:attack}), are not attacks of \T. 
This restriction in \Cref{def:DAT:attack} results in the coherence of \DATs:

\begingroup
\def\A{\attack_1}
\def\P{{\prec_1}}
\def\AA{\attack_2}
\def\PP{{\prec_2}}
\begin{proposition}
	\label{prop:coherent_DATs}
	A well-formed dynamic \AT \T is coherent: 
	if $\poset[\A][\P],\poset[\AA][\PP] \in \allAttacks_{\T}$
	and \poset[\A][\P] is a successful attack, then $\A\subseteq\AA$
	implies that \poset[\AA][\PP] is also a successful attack.
\end{proposition}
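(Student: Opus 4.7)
\emph{Proof proposal.} The plan is to reduce the claim to coherence of static attack trees. First I would write the two attacks as $\poset[\attack_1][\prec_1]$ and $\poset[\attack_2][\prec_2]$ with $\attack_1\subseteq\attack_2$, and unpack the hypothesis $\poset[\attack_1][\prec_1],\poset[\attack_2][\prec_2]\in\allAttacks_{\T}$: by \Cref{def:DAT:attack} this forces $\prec_1={\before[\T]}\restriction_{\attack_1}$ and $\prec_2={\before[\T]}\restriction_{\attack_2}$. In particular $\poset[\attack_2][\prec_2]$ is already guaranteed to be a valid attack of $\T$; only its successfulness remains to be established.

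Next I would observe that, per \Cref{def:DAT:attack}, the successfulness of a \DAT attack depends only on its underlying set of \BASs: the test $\sfunT[\T'](\attack)=\top$ refers to $\T'$, the \SAT obtained from $\T$ by replacing every $\SAND$ gate with an $\AND$, and does not involve $\prec$ at all.

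With these two observations the argument closes by invoking coherence of $\T'$: since $\T'$ is a \SAT, the coherence property recalled after \Cref{def:SAT:semantics} (cf.\ \cite{BP75}) yields that $\sfunT[\T'](\attack_1)=\top$ together with $\attack_1\subseteq\attack_2$ imply $\sfunT[\T'](\attack_2)=\top$; hence $\poset[\attack_2][\prec_2]$ is successful.

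I do not expect any genuine obstacle; the design choice that makes the argument this short is \Cref{def:DAT:attack}'s insistence that $\prec$ be determined by the ordering graph. This is precisely what excludes the apparent counterexample $\SAND(a,\OR(b,c))$ discussed just before the proposition: the poset $\poset[\{a,b,c\}][\{c\prec a,\,a\prec b\}]$ is simply not an attack of that tree, because its order disagrees with $\before[\T]$. The only legitimate attack with underlying set $\{a,b,c\}$ is $\poset[\{a,b,c\}][\{a\prec b,\,a\prec c\}]$, which is successful; so coherence is restored at the definitional level by ruling out illegitimate orderings.
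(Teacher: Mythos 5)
Your proposal is correct and follows essentially the same route as the paper's own proof: both reduce successfulness to the structure function of the static tree $\T'$ obtained by turning \SAND{s} into \AND{s}, invoke coherence of \SATs, and note that the partial order of the larger attack is automatically the restriction of $\before[\T]$ by \Cref{def:DAT:attack}. Your closing remark about why the definition rules out the apparent counterexample is a faithful restatement of the paper's own discussion preceding the proposition.
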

\begin{proof}
	Let $\poset[\A][\P],\poset[\AA][\PP] \in \allAttacks_{\T}$.
	By \Cref{def:DAT:attack}, if \poset[\A][\P] is a successful attack of \T 
	then $\sfunT[\T'](\A)=\top$, where $\T'$ is the static \AT obtained by
	transforming all \SAND gates of \T to \AND{s}.
	Since \SATs are coherent:
	${\A\subseteq\AA\Rightarrow\sfunT[\T'](\AA)=\top}$.
	Finally by \Cref{def:wellformed,def:DAT:attack},
	$\PP={\before[\T]}\restriction_{\AA}$ implies that the sequences
	of execution of $\AA\subseteq\BAS$ represented by \poset[\AA][\PP]
	respect the order imposed by the \SAND gates of \T.
\end{proof}
\endgroup

This means that, analogously to static \ATs, the semantics of dynamic \ATs can be given by their minimal attacks:

\begin{definition}
	\label{def:DAT:semantics}
	The \emph{semantics of a well-formed \DAT} \T,
	denoted $\dsem{\T}\in\allSuites_{\T}$, is its suite of minimal attacks.
	%
\end{definition}

A price to pay for this result, and for such straightforward extensions of static concepts, is a strict notion of well-formed dynamic \AT:
besides discarding modeling errors such as \T[2] in \Cref{fig:wellformed:invalid}, it also discards \DATs where the children of a \SAND gate share subtrees.
To see this let ${\T = \SAND(v_1,v_2)} = {\SAND\big(\AND(a,b),\AND(b,c)\big)}$, whose ordering graph \ograph{\T} has edges from every descendant $\desc(v_1)=\{a,b\}$ to every descendant ${\desc(v_2)=\{b,c\}}$.
But then \ograph{\T} has a self-loop in the \BAS $b\before b$, which means that \T is ill-formed.

Our semantics also entails a strict notion of (successful) attack, that rules out some interleavings in the execution of high-level \SAND gates.
Consider e.g.\ ${\T'=\SAND\big(a,\AND(b,c)\big)}$, where \Cref{def:wellformed} forces $a$ to occur before any of $\{b,c\}$.
Then $b\before a$ is \emph{not an edge} in \ograph{\T'}, so our attacks \emph{exclude the order $b\prec a$}, even though $(b,a,c)$ is a valid execution sequence in $\T'$.

To relax this we need a more complex notion of ordering graph, as we discuss in \Cref{sec:conclu}.
However, this work is about the efficient computation of metrics, and as we show next these metrics are invariant for the different valid orders of execution of \BAS.
Therefore, here we use the stricter but simpler semantics that stems from \Cref{def:wellformed} of well-formedness.

Finally, \Cref{lemma:dsem} characterises the semantics resulting from \Crefrange{def:wellformed}{def:DAT:semantics}, analogously to how \Cref{lemma:ssem} does it for static \ATs.
This is key to prove the correctness of linear-time algorithms that compute metrics on tree-structured \DATs.
\ifthenelse{\boolean{forCSF}}{}{We prove this \namecref{lemma:dsem} in \Cref{sec:proofs}, \cpageref{lemma:dsem:proof}.}

\begin{lemma}
	\label{lemma:dsem}
	\def\REF#1{\textit{\ref{#1})}}
	\def\REFS#1#2{\textit{\ref{#1})--\ref{#2})}}
	Consider a well-formed \DAT with nodes ${a\in\BAS}$, ${v_1,v_2\in\ATnodes}$,
	that has a proper tree structure. Then:
	\begin{enumerate}
	\item	$\dsem{a} = \{ \poset[\{a\}][\emptyset] \}$;%
			\label{lemma:dsem:BAS}
	\item	$\dsem{\OR(v_1,v_2)} = \dsem{v_1} \cup \dsem{v_2}$;%
			\label{lemma:dsem:OR}
	\item	\rmkCEB{this $i$-subs\-cript notation (e.g.\ \dsem{v_i})
					isn't 100\% mathematically correct. We may restate
					\Cref{lemma:dsem} in a journal version with space}
			$\dsem{\AND(v_1,v_2)} = \mbox{$\left\{
				\poset[\attack_1{\cup}\attack_2][{\prec_1}{\cup}{\prec_2}]
				\,|\, \poset[\attack_i][\prec_i]\in\dsem{v_i}
			\right\}$}$;%
			\label{lemma:dsem:AND}
	\item	$\dsem{\SAND(v_1,v_2)} = \{
				\poset[\attack_1\cup\attack_2 \,]%
				      [~{\prec_1}\cup{\prec_2}\cup{\attack_1\times\attack_2}]
				\cdots$ \\
				\hspace*{\stretch{1}} $\cdots
					\mid \mbox{$\poset[\attack_i][\prec_i]\in\dsem{v_i}$}
			\}$;%
			\label{lemma:dsem:SAND}
	\item	In cases \REFS{lemma:dsem:OR}{lemma:dsem:SAND} above the
			\dsem{v_i} are disjoint, and in cases \REF{lemma:dsem:AND} and
			\REF{lemma:dsem:SAND} moreover the $A_i$ are pairwise disjoint.
			\label{lemma:dsem:disjoint}
	\end{enumerate}
\end{lemma}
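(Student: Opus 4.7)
The plan is a structural induction on the tree-structured DAT $\T$, in parallel with the corresponding result for SATs (\Cref{lemma:ssem}). Throughout, let $\T'$ denote the static attack tree obtained by replacing every \SAND in $\T$ by an \AND; note that $\T'$ is also tree-structured, and by \Cref{def:DAT:attack} the successful attacks of $\T$ are exactly those posets $\poset$ with $\sfunT[\T'](\attack)=\top$ and ${\prec}={\before[\T]}\restriction_{\attack}$. I will use \Cref{lemma:ssem} applied to $\T'$ to handle the $\attack$-component of attacks, and then argue separately about the order-component $\prec$ via the ordering graph \ograph{\T}.

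First I would dispatch the easy cases. Case \textit{\ref{lemma:dsem:BAS})} is immediate: the only subset of $\{a\}$ is $\{a\}$ itself, and $\before[\T]\restriction_{\{a\}}=\emptyset$ since a single \BAS cannot appear on either side of any $\before[\T]$-edge. Case \textit{\ref{lemma:dsem:OR})} follows because, by the tree-structure hypothesis, every $\SAND$ in $\T$ lies entirely under $v_1$ or entirely under $v_2$, so $\before[\T]$ on $\BAS_{\T}$ is the disjoint union of $\before[v_1]$ and $\before[v_2]$; combined with $\ssem{\OR(v_1,v_2)}_{\T'}=\ssem{v_1}_{\T'}\cup\ssem{v_2}_{\T'}$ (\Cref{lemma:ssem}), minimality in $\T$ coincides with minimality in the respective subtree, giving the desired union. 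Case \textit{\ref{lemma:dsem:disjoint})} drops out of the tree-structure: since $\BAS_{v_1}\cap\BAS_{v_2}=\emptyset$, any $\attack_1\in\dsem{v_1}$ and $\attack_2\in\dsem{v_2}$ are disjoint, and the posets in $\dsem{v_1}$ and $\dsem{v_2}$ have disjoint underlying sets in cases \textit{\ref{lemma:dsem:AND})} and \textit{\ref{lemma:dsem:SAND})}.

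For case \textit{\ref{lemma:dsem:AND})}, I would proceed by double inclusion. For $\supseteq$: given $\poset[\attack_i][\prec_i]\in\dsem{v_i}$, \Cref{lemma:ssem}\textit{\ref{lemma:ssem:AND})} applied to $\T'$ gives that $\attack_1\cup\attack_2$ is a minimal successful attack of the corresponding static AT, so $\sfunT[\T'](\attack_1\cup\attack_2)=\top$; and by the tree-structure and \Cref{def:wellformed}, every $\before[\T]$-edge between descendants of an \AND-gate either lies entirely within $\BAS_{v_1}$ or entirely within $\BAS_{v_2}$, hence $\before[\T]\restriction_{\attack_1\cup\attack_2}={\prec_1}\cup{\prec_2}$. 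Minimality of the pair $(\attack_1\cup\attack_2,\prec_1\cup\prec_2)$ follows because the order is forced by $\before[\T]\restriction$ once the set is fixed, and the set is minimal by \Cref{lemma:ssem}. For $\subseteq$: any minimal $\poset\in\dsem{\AND(v_1,v_2)}$ has $\attack=\attack_1\sqcup\attack_2$ with $\attack_i\in\ssem{v_i}_{\T'}$ by \Cref{lemma:ssem}, and then $\prec_i\mathrel{:=}{\prec}\restriction_{\attack_i}={\before[\T]}\restriction_{\attack_i}={\before[v_i]}\restriction_{\attack_i}$ shows $\poset[\attack_i][\prec_i]\in\dsem{v_i}$.

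Case \textit{\ref{lemma:dsem:SAND})} is the main obstacle and requires the full strength of well-formedness. The extra ingredient is that for a $\SAND(v_1,v_2)$ gate, \Cref{def:wellformed} adds a $\before[\T]$-edge from every $a\in\desc(v_1)=\BAS_{v_1}$ to every $b\in\desc(v_2)=\BAS_{v_2}$. Thus for any $\attack_1\subseteq\BAS_{v_1}$ and $\attack_2\subseteq\BAS_{v_2}$ we get
\[
{\before[\T]}\restriction_{\attack_1\cup\attack_2} \;=\; {\before[v_1]}\restriction_{\attack_1}\;\cup\;{\before[v_2]}\restriction_{\attack_2}\;\cup\;(\attack_1\times\attack_2),
\]
where the three pieces are disjoint by the tree-structure hypothesis (no \SAND higher up can create further edges between these \BASs, as $v$ is the root of the subtree considered, and any \SAND below creates edges already accounted for in $\before[v_1]$ or $\before[v_2]$). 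The $\supseteq$ inclusion then mirrors the AND case, using that $\sfunT[\T']$ treats $\SAND$ as $\AND$ and applying \Cref{lemma:ssem}\textit{\ref{lemma:ssem:AND})}. For $\subseteq$, minimality of $\poset\in\dsem{\SAND(v_1,v_2)}$ forces $\attack=\attack_1\sqcup\attack_2$ with $\attack_i$ minimal in $v_i$, and the full order ${\prec}={\before[\T]}\restriction_{\attack}$ expands via the displayed identity to exactly ${\prec_1}\cup{\prec_2}\cup(\attack_1\times\attack_2)$, completing the induction.
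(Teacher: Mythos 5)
Your proof is correct and follows essentially the same route as the paper's: structural induction on the subtree, reduction to \Cref{lemma:ssem} applied to the static tree $\T'$ for the set-component of each attack, and a separate analysis of how the ordering graph $\before[\T]$ decomposes over the children to obtain the $\prec$-component (with the extra $\attack_1\times\attack_2$ edges contributed by the root \SAND gate). Your write-up is if anything slightly more explicit than the paper's, spelling out the double inclusions and the identity for ${\before[\T]}\restriction_{\attack_1\cup\attack_2}$, but the underlying argument is the same.
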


\ifthenelse{\boolean{forCSF}}{%
\def\REF#1{\textit{\ref{#1})}}
We prove this \namecref{lemma:dsem} in \cite{arXiv}: just like for \Cref{lemma:ssem}, we use induction in the structure of an \AT whose root is the node $v$ on the left-hand side \dsem{v} of the equalities.
In particular, cases \REF{lemma:dsem:BAS}, \REF{lemma:dsem:OR}, and \REF{lemma:dsem:disjoint}, are trivial extensions of \Cref{lemma:ssem};
case \REF{lemma:dsem:AND} uses the minimality of the $\prec$ relation;
and case \REF{lemma:dsem:SAND} moreover considers the order requirements imposed by the \SAND gate on the \BAS descendants of $v_1$ and $v_2$.
}{}

\paragraph{Comparison with literature}
The semantics for dynamic \ATs resulting from \Crefrange{def:wellformed}{def:DAT:semantics} resembles the so-called \emph{series-parallel graphs} from \cite{JKM+15}.
We define dynamic attacks as posets for a number of reasons:
\begin{itemize}[label=\textbullet,leftmargin=1em]
\item	they are a succinct, natural lifting of the \SAT concepts,
		that facilitate the extension of earlier results such as the
		characterisation of \dsem{\cdot} in \Cref{lemma:dsem};
\item	metrics can be formally defined on this semantics,
		decoupling specific algorithms from a notion of correctness;
\item	in particular, this allows us to define algorithms to compute metrics
		regardless of the tree- or \DAG-structure of the \DAT.
\end{itemize}
%
%
\begingroup
\def\series{\boldsymbol{\cdot}}
The latter is different for \cite{JKM+15}, which does not work for \DAG-structured \DATs as noted in \cite{KW18}.
This can be illustrated in $\T[3] = \AND\big(\SAND(a,b),\SAND(b,c)\big)$, the \AT from \Cref{fig:wellformed:valid} whose series-parallel graph is $\mathit{SP}_3 = {(a\series b)\parallel(b\series c)}$.
Attributes and metrics are also defined in \cite{JKM+15}, choosing operators for \AND and \SAND gates which are resp.\ mapped to $\parallel$ and $\series$ in $\mathit{SP}$.
Let the operator be ${+}$, e.g.\ to compute attack cost, and consider the attribution $\attrOp=\{{a\mapsto1},{b\mapsto4},{c\mapsto8}\}$:
the metric obtained for $\mathit{SP}_3$ is $(1+4)+(4+8) = 17$.
But the expected result is $13$, i.e.\ execute every \BAS once.
\endgroup
\\[.6ex]
In contrast, posets entail a formal definition of metric over \DAT semantics---given now in \Cref{sec:DAT:metrics}---which in particular yields the expected result even for \DAG-structured \DATs.

\subsection{Security metrics for dynamic attack trees}
\label{sec:DAT:metrics}

The same fundamental concepts of metric for static \ATs work for dynamic \ATs: from the attributes of every \BAS, obtain a metric for each attack in \dsem{\T}, and from these values compute the metric for \T.
Thus, the generic notion of metric given by \Cref{def:metric} in \Cref{sec:SAT:metrics} carries on to this \namecref{sec:DAT:metrics}.

However, attribute domains do not suffice for \DATs: metrics such as min attack time are sensitive to order dependencies among \BAS.
This requires an additional \emph{sequential operator ${\operSAND \from \Vdom^2 \to \Vdom}$}, to compute values of sequential parts in an attack.
Therefore, metric computations gain an extra step:
\begin{enumerate}
\setcounter{enumi}{-1}
\item	first, an attribution $\attrOp$ assigns a value to each \BAS;
\item	then, a sequential metric $\metrSOp$ uses the operator $\operSAND$ to assign a value to each sequential part of an attack;
\item	then, a parallel metric $\metrAOp$ uses $\operAND$ to assign a value to each attack, as the parallel execution of all its sequential parts;
\item	finally, the metric $\metrOp$ uses $\operOR$ to assign a value to the whole attack suite, by considering all its constituting attacks.
\end{enumerate}

This can be pictured on the Hasse diagrams that represent the posets:
for every attack $\poset\in\dsem{\T}$, its (unique) Hasse diagram \Hasse is the restriction of the ordering graph \ograph{\T} to the nodes in $\attack\subseteq\BAS$---see e.g.\ \Cref{fig:Hasse} for \sampleTd from \Cref{ex:running_examples}.
So \Hasse is a set of nodes, some of which are connected by edges and form a connected component \ccomp.
In the 4-steps computation described above, this means that:
\begin{enumerate}
\setcounter{enumi}{0}  
\item	$\metrSOp$ uses $\operSAND$ on each connected component
		$\{\ccomp_i\}_{i=1}^{n_{\attack}}$ of \Hasse,
		yielding one value $s_i\in\Vdom$ for each $\ccomp_i$;
\item	$\metrAOp$ uses $\operAND$ on $\{s_i\}_{i=1}^{n_{\attack}}$,
		yielding a metric for the attack \Hasse;
\item	$\metrOp$ uses $\operOR$ on the metrics of all attacks in \dsem{\T},
		yielding the metric for the dynamic attack tree \T.
\end{enumerate}

\begin{figure}
  \centering
  \def\WIDTH{.16\linewidth}
  \def\OG{\ograph{\scalebox{.8}{$\sampleTd$}}}
  \begin{subfigure}{\WIDTH}
	\centering
	\includegraphics[width=\linewidth]{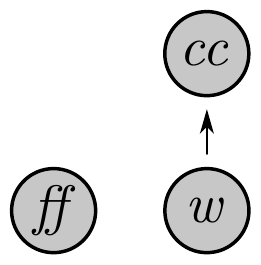}
	\caption{\OG}
	\label{fig:Hasse:ograph}
  \end{subfigure}
  \quad~
  \begin{subfigure}{\WIDTH}
	\centering
	\includegraphics[height=1.02\linewidth]{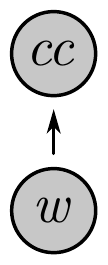}
	\caption{\Hasse[P_1][]}
	\label{fig:Hasse:wc}
  \end{subfigure}
  \quad~
  \begin{subfigure}{\WIDTH}
	\centering
	\includegraphics[width=\linewidth]{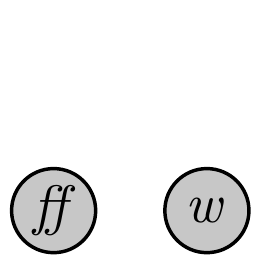}
	\caption{\Hasse[P_2][]}
	\label{fig:Hasse:fw}
  \end{subfigure}
  \quad~
  \begin{subfigure}{\WIDTH}
	\centering
	\includegraphics[width=\linewidth]{Hasse_ff_w_cc}
	\caption{\Hasse[P_3][]}
	\label{fig:Hasse:fwc}
  \end{subfigure}
  \caption{Ordering graph and Hasse diagrams of attacks of \sampleTd: \
	ordering graph \OG, \
	attack $P_1=\poset[\{\ww,\cc\}][\{\ww\prec\cc\}]$, \
	attack $P_2=\poset[\{\ff,\ww\}][\emptyset]$, \
	attack $P_3=\poset[\{\ff,\ww,\cc\}][\{\ww\prec\cc\}]$}
  \label{fig:Hasse}
\end{figure}

We now formalise these concepts, and write $\metr{\T}$ for the unique value $\metr{\ssem{\T}}$ of the dynamic \AT \T, thus mapping \Cref{def:DAT:metric} to the generic notion of metric given in \Cref{def:metric}.
\vspace{-1.5ex}

\begin{definition}
	\label{def:DAT:metric}
	\def\VH{\vphantom{\bigoperAND_{\Hasse}}}
	Let $\operOR,\operAND,\operSAND$ be three associative and commutative
	operators over a set \Vdom:
	we call $\domain=(\Vdom,\operOR,\operAND,\operSAND)$
	a \emph{dynamic attribute domain}.
	Let \T be a well-formed dynamic \AT and $\attrOp$ an attribution on \Vdom.
	The \emph{metric for \T} associated to \domain and $\attrOp$ is given by:
	\begin{align*}
		\metr{\T} &=
			\underbrace{\VH\bigoperOR_{\poset\in\dsem{\T}}}_{\mathlarger\metrOp}
			\underbrace{\VH\;\bigoperAND_{\ccomp\in\Hasse}\;}_{\mathlarger\metrAOp}
			\underbrace{\VH~\bigoperSAND_{a\in\ccomp}~}_{\mathlarger\metrSOp}
			\attr{a}
	\end{align*}
	where \Hasse is the Hasse diagram of attack \poset, and
	$a\in\ccomp$ ranges over the nodes of the connected component \ccomp of \Hasse.
\end{definition}

\begin{example}
	\label{ex:DAT:metric}
	\def\wrap#1{\Bigg(\raisebox{1ex}{$\displaystyle #1$}\Bigg)}
	The semantics of the dynamic \AT from \Cref{ex:running_examples} is
	$\dsem{\sampleTd}= \{
	\poset[\{\ww,\cc\}][\{\ww\prec\cc\}]\,,
	\poset[\{\ff,\ww\}][\emptyset]\}$.
	The Hasse diagrams of these attacks---which resp.\ have one and two
	connected components---are shown in \Cref{fig:Hasse:fw,fig:Hasse:wc}.
	To compute the \emph{min time} metric of \sampleTd consider the attribution
	${\attrOp}=\{{\ff\mapsto3}, {\ww\mapsto15}, {\cc\mapsto1}\}$
	and the dynamic attribute domain $\domain=(\NN,\min,\max,+)$.
	Then the time of the fastest attack for \domain and $\attrOp$ is:
	\begin{align*}
	\metr{\sampleTd}
		&= \bigoperOR_{\poset\in\dsem{\T}}
		   \bigoperAND_{~\ccomp\in\Hasse~}
		   \bigoperSAND_{a\in\ccomp}~\attr{a}\\
		&= \wrap{\bigoperAND_{\ccomp\in\Hasse[\{\ff,\ww\}][\emptyset]}
		   \bigoperSAND_{a\in\ccomp}~\raisebox{-.2ex}{$\attr{a}$}}
		   \mathbin{\text{\raisebox{.3ex}{$\operOR$}}}
		   \wrap{\bigoperAND_{\ccomp\in\Hasse[\{\ww,\cc\}][\ww\prec\cc]}
		   \bigoperSAND_{a\in\ccomp}~\raisebox{-.2ex}{$\attr{a}$}}\\
		&= \big(\attr{\ff}\operAND\attr{\ww}\big)
		   \operOR
		   \big(\attr{\ww}\operSAND\attr{\cc}\big)\\
		&= (3 \maxb 15) \minb (15 + 1) ~=~ 15.
	\end{align*}
	In that computation, attack $\poset[\{\ff,\ww\}][\emptyset]
	\equiv \Hasse[\{\ff,\ww\}][\emptyset]$ has two parallel steps: two
	connected components with one node each---see \Cref{fig:Hasse:fw}---so
	operator $\operAND$ has two operands with one node each:
	$\ccomp=\{\ff\}$ and $\ccomp'=\{\ww\}$.
	In contrast, $\poset[\{\ww,\cc\}][\{\ww\prec\cc\}]
	\equiv \Hasse[\{\ww,\cc\}][\ww\prec\cc]$ has one
	connected component with two nodes---see \Cref{fig:Hasse:wc}---so
	operator $\operAND$ has one operand but $\operSAND$ has two:
	\attr{\ww} and \attr{\cc}.
	Finally, the \emph{min time} of \sampleTd is the ${\operOR}=\min$ of these
	two metrics: the one for \poset[\{\ff,\ww\}][\emptyset].%
	\\ \vphantom{\raisebox{.3ex}{$\Big)$}}%
	Now consider the attributes
	${\attrOp'}=\{{\ff\mapsto42},{\ww\mapsto10},{\cc\mapsto0}\}\!$
	of \emph{min skill} required for each \BAS of \sampleTd.
	Min skill is oblivious of sequential order:
	the skill needed to perform a task is independent of whether
	it must wait for the completion of other tasks.
	So, to compute the \emph{min skill} metric of \sampleTd we use the
	dynamic attribute domain $\domain'=(\NN,\min,\max,\max)$, where the
	operators $\operAND$ and $\operSAND$ are the same.
	This results in:
	\begin{align*}
	\metrOp'(\sampleTd)
		&= \big(\attrOp'(\ff)\operAND'\attrOp'(\ww)\big)
		   \operOR'
		   \big(\attrOp'(\ww)\operSAND\!'\attrOp'(\cc)\big)\\
		&= (42 \maxb 10) \minb (10 \maxb 0) ~=~ 10.
	\end{align*}
\end{example}

\begin{example}
	\label{ex:DAT:metric_vs_Barby}
	Consider the \DAG-structured dynamic \AT from \Cref{fig:wellformed:valid},
	$\T[3] = \AND\big(\SAND(a,b),\SAND(b,c)\big)$, whose ordering graph is
	$\ograph{\T[3]}={a\before b\before c}$ which yields the semantics
	$\dsem{\T[3]}=\{\poset[\{a,b,c\}][\{a\prec b, b\prec c\}]\}$.
	To compute the min attack cost let $\operAND=\operSAND={+}$ and
	$\attrOp=\{{a\mapsto1},{b\mapsto4},{c\mapsto8}\}$ as in the comparison
	with \cite{JKM+15}.
	The Hasse diagram of the poset in \dsem{\T[3]} has one connected component
	with three nodes, so:
	\begin{align*}
	\metr{\T[3]}
		&= \bigoperOR_{\poset\in\dsem{\T[3]}}
		   \bigoperAND_{~\ccomp\in\Hasse~}
		   \bigoperSAND_{a\in\ccomp}~\attr{a}\\
		&=~ \attr{a}\operSAND\attr{b}\operSAND\attr{c} ~=~ 1+4+8 ~=~ 13.
	\end{align*}
\end{example}

Many metrics are like min skill and cost in \Cref{ex:DAT:metric,ex:DAT:metric_vs_Barby}: insensitive to the sequentiality of events.
Therefore, reproducing \Cref{tab:SAT:metric} for dynamic \ATs will introduce a third column for operator $\operSAND$ which resembles the column for $\operAND$.
A main relevant exception is min attack time, where $\operSAND={+}$ because each \BAS in an order-dependency chain must wait for the completion of its predecessor, whereas $\operAND=\max$ yields the time of the slowest parallel part of the attack.

Note also that \emph{the order of execution} of the \BAS in the connected components of an attack \emph{is irrelevant for the computation of a metric}.
This is a direct consequence of the commutativity of the operator $\operSAND$.


\section{Computations for tree-structured DATs}
\label{sec:DAT_trees}

A precondition for our results is that the dynamic \ATs are well-formed as per \Cref{def:wellformed}.
\Cref{alg:is_well_formed} checks this by building the edge relation $\before$ of the ordering graph $\ograph{}=(\BAS,\before)$, and invoking a routine that checks whether \ograph{} has directed cycles.
%
%
\Cref{alg:is_well_formed} terminates after at most \bigO{n^2m} steps (i.e.\ additions of pairs to $\before$), where $n=\card{\BAS}$ and $m$ is the number of \SAND gates.
Ideally one would operate with the transitive reduction of $\before$, computable in less than \bigO{n^{2.5}} \cite{AGU72}.
\vspace{-.5ex}


\begin{algorithm}
	\def\edges{\ensuremath{\mathit{edges}}}
	\KwIn{Dynamic attack tree $\T=(\ATnodes,\typOp,\chOp)$.}
	\KwOut{Whether \T is a well-formed \DAT.}
	\edges\ := $\emptyset$\\
	\ForEach{$\SAND(v_1,\ldots,v_{n+1})\in\ATnodes$}{%
		\For{$i=1$ \KwTo $n$}{%
			\edges\ := $\edges \cup \big(\desc(v_i)\times\desc(v_{i+1})\big)$
		}
	}
	\Return{$\nexists~\text{directed cycle in}~\ograph{}=(\BAS,\edges)$}
	\caption{$\iswellformed(\T)$}
	\label{alg:is_well_formed}
\end{algorithm}

\vspace{-.5ex}
Earlier in \Cref{ex:DAT:metric}, the computation of metrics for dynamic \ATs was illustrated using \Cref{def:DAT:metric}, which is worst-case exponential in the number of nodes.
However and as for \SATs, there is a linear bottom-up algorithm to compute metrics for tree-structured \DATs.
We present a recursive version in \Cref{alg:bottom_up_DAT}, and state its correctness in \Cref{theo:bottom_up_DAT}.

\ifthenelse{\boolean{forCSF}}%
{
To prove the \SAND case of the \namecref{theo:bottom_up_DAT}, operator $\operSAND$ must distribute over $\operOR$ and $\operAND$;
the rest are trivial extensions---to attacks as posets---of the same cases from \Cref{theo:bottom_up_SAT}.
In \cite{arXiv} we give the full proof of \Cref{theo:bottom_up_DAT} by structural induction on \T.
\par
}{
For the case $\SAND(v_1,v_2)$, the proof of \Cref{theo:bottom_up_DAT} (in \Cref{sec:proofs}, \cpageref{theo:bottom_up_DAT:proof}) uses the distributivity of operator $\operSAND$ over $\operOR$ and $\operAND$.
}%
Thus and importantly, besides the tree-structure of the dynamic \AT, the correctness of \Cref{alg:bottom_up_DAT} requires the presence of three semiring algebraic structures: not only $(\Vdom,\operOR,\operAND)$ as in the static case, but also $(\Vdom,\operOR,\operSAND)$ and $(\Vdom,\operAND,\operSAND)$.

\begin{definition}
	\label{def:semiring_dynamic_attribute_domain_OMG_howlong_isthislabel}
	A \emph{semiring dynamic attribute domain} is a dynamic attribute domain
	$\domain=(\Vdom,\operOR,\operAND,\operSAND)$ where
	operator $\operSAND$ distributes over $\operAND$ and $\operOR$,
	and also $\operAND$ distributes over~$\operOR$.
	%
	%
\end{definition}


\begin{algorithm}
	\KwIn{Dynamic attack tree $\T=(\ATnodes,\typOp,\chOp)$,\newline
	      node $v\in\ATnodes$,\newline
	      attribution $\attrOp$,\newline
	      semiring dynamic attr.\ dom.\ ${\domain=(\Vdom,\operOR,\operAND,\operSAND)}$.}
	\KwOut{Metric value $\metr{\T}\in\Vdom$.}
	\BlankLine
	\uIf{$\type{v}=\tOR$}{%
		\Return{$\bigoperOR_{u\in\child{v}}
		         \BUDAT(\T,u,\attrOp,\domain)$}
	} \uElseIf{$\type{v}=\tAND$}{%
		\Return{$\bigoperAND_{u\in\child{v}}
		         \BUDAT(\T,u,\attrOp,\domain)$}
	} \uElseIf{$\type{v}=\tSAND$}{%
		\Return{$\text{\raisebox{.3ex}{%
		        $\bigoperSAND_{u\in\child{v}}
		         \BUDAT(\T,u,\attrOp,\domain)$}}$}
	} \Else(\tcp*[h]{$\type{v}=\tBAS$}) {%
		\Return{\attr{v}}
	}
	\caption{\BUDAT for a tree-structured \DAT \T}
	\label{alg:bottom_up_DAT}
\end{algorithm}

\begin{theorem}
	\label{theo:bottom_up_DAT}
	Let \T be a well-formed tree-structured \DAT,
	$\attrOp$ an attribution on \Vdom,
	and $\domain=(\Vdom,\operOR,\operAND,\operSAND)$ a
	semiring dynamic attribute domain.
	Then $\metr{\T} = \BUDAT(\T,\ATroot,\attrOp,\domain)$.
\end{theorem}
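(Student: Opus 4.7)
The plan is to carry out structural induction on $\T$, generalising the claim so that for every $v \in \ATnodes$, $\BUDAT(\T,v,\attrOp,\domain)$ equals the metric of the sub-DAT rooted at $v$. \Cref{lemma:dsem} is the engine of the induction: each of its four cases exactly mirrors one recursive branch of BUDAT, so the proof reduces to pushing the semiring identities of $\domain$ through that decomposition.

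The base case $\type{v}=\tBAS$ is immediate from \Cref{lemma:dsem}: $\dsem{v}=\{\poset[\{v\}][\emptyset]\}$ has a single one-node Hasse diagram, so $\metr{v}=\attr{v}$, which is exactly what BUDAT returns. The $\tOR$ and $\tAND$ cases then extend the arguments of \Cref{theo:bottom_up_SAT} to posets: in both, the disjointness assertions of \Cref{lemma:dsem} ensure that the Hasse diagrams (resp.\ their components, in the $\tAND$ case) of combined attacks decompose cleanly as disjoint unions of those of $P_1$ and $P_2$. For $\tOR$, the outer $\bigoperOR$ splits directly over $\dsem{v_1}\cup\dsem{v_2}$; for $\tAND$, the inner $\bigoperAND$ factors across the two sides and the distributivity of $\operAND$ over $\operOR$ pulls the outer $\bigoperOR$ past $\operAND$. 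Both cases then close by the induction hypothesis.

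The hard part is the $\tSAND$ case. \Cref{lemma:dsem} combines $P_i=\poset[\attack_i][\prec_i]\in\dsem{v_i}$ into $P=\poset[\attack_1\cup\attack_2][\prec_1\cup\prec_2\cup(\attack_1\times\attack_2)]$, and the cross-edges $\attack_1\times\attack_2$ place every component of $P_1$'s Hasse diagram in series with every component of $P_2$'s, so the combined contribution at the poset level collapses into $\metr{P_1}\operSAND\metr{P_2}$. Establishing this pointwise identity is where the two new distributivity laws are required: $\operSAND$ over $\operAND$ is needed to interleave the component-wise $\operAND$-products of $P_1$ and $P_2$ into the series-composed structure of $P$, while associativity and commutativity of $\operSAND$ collect the resulting cross terms correctly (the disjointness $\attack_1\cap\attack_2=\emptyset$ from \Cref{lemma:dsem} ensures no attribute is counted twice, and well-formedness of $\T$ guarantees the $\attack_1\times\attack_2$ edges are consistent with $\before[\T]$). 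Finally, distributivity of $\operSAND$ over $\operOR$, applied twice, lifts the pointwise identity to suites, rewriting $\bigoperOR_{(P_1,P_2)}\bigl(\metr{P_1}\operSAND\metr{P_2}\bigr)$ as $\bigl(\bigoperOR_{P_1}\metr{P_1}\bigr)\operSAND\bigl(\bigoperOR_{P_2}\metr{P_2}\bigr)$; this matches $\BUDAT(\T,v_1,\attrOp,\domain)\operSAND\BUDAT(\T,v_2,\attrOp,\domain)$ by the induction hypothesis.
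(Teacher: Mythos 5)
Your proposal is correct and follows essentially the same route as the paper's proof: structural induction on (binary) \T with \Cref{lemma:dsem} as the decomposition engine, the \tBAS/\tOR/\tAND cases lifted from \Cref{theo:bottom_up_SAT}, and the \tSAND case resolved by noting that the cross-edges $\attack_1\times\attack_2$ put the components of the two Hasse diagrams in series, then pushing $\operSAND$ through $\operAND$ and $\operOR$ via the semiring laws. The only difference is cosmetic --- you factor the metric into $\BU(v_1)\operSAND\BU(v_2)$ while the paper expands that product back into the metric --- so no further comment is needed.
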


\smallskip

\section{Computations for DAG-structured DATs}
\label{sec:DAT_DAGs}

\Cref{alg:bottom_up_DAT} does not work for dynamic \ATs with a \DAG-structure, for the same reasons exposed for static \ATs in \Cref{sec:SAT_DAGs}.
Neither is it possible to propose algorithms based on standard \BDD theory: even though the structure function of \SATs was reused in \Cref{def:DAT:semantics}, the computation of metrics for \DATs intrinsically needs a notion of order among their \BAS, that is not present in standard \BDD-based data types.

As discussed in \Cref{sec:intro}, some earlier general approaches do exist to compute metrics on \DAG-structured \DATs \cite{KRS15,AGKS15}.
However, these often overshoot in terms of computation complexity.
For static \ATs and from a procedural (rather than semantic) angle, \cite{KW18} proposes a more efficient, ingenious approach that computes and then corrects a metric value by traversing the \AT bottom-up multiple times.
It may be possible to extend this algorithm to consider \SAND gates as well \cite{WAFP19}.

Alternatively, \Cref{def:DAT:metric} of metric for \DATs could be encoded into a na\"ive algorithm.
This would enumerate all posets from \dsem{\T}, and compute the metric value \metr{\T} by means of three nested loops that traverse all these Hasse diagrams.
We do not expect such approach to be computationally efficient.

Instead and as in the static case, we expect that \BDD encodings of the \DAT offer better solutions.
This requires \BDD structures that are somehow sensitive to variable orderings.
In that sense, the so-called sequential-\BDDs recently presented for dynamic fault trees seem promising \cite{YW20}.
A first challenge would be to extend them to attributes other than failure (viz.\ attack) probability.
Harder to tackle is the combinatorial explosion, that stems for the different possible orderings of \BAS descendants of \SAND gates.

In view of these considerations, we regard the algorithmic analysis for \DAG-structured dynamic attack trees as an important open problem for future research. 


\section{Conclusions \texorpdfstring{\href{https://youtu.be/CjHX0PRFq2Q}{\color{white}.}\!\!\!}{}}
\label{sec:conclu}

This paper presents algorithms to compute quantitative security metrics on attack trees.
Our approach is formal: we classify \AT models based on their structure and components, and then for each class we:
\begin{enumerate*}[label=(\arabic*)]
\item	revise and consolidate its semantics in line with the literature,
\item	define metrics generically on these semantics,
\item	present algorithms to compute them, and
\item	show the correctness of our algorithms,
		as well as their optimality in terms of computational complexity.
\end{enumerate*}

\Cref{alg:bottom_up_BDD} is a prominent result: it computes metrics efficiently in \DAG-structured static \ATs, from a given semiring attribute domain with neutral elements.
Another key contribution is the poset semantics defined for \DATs in \Cref{sec:DAT}:
it lifts the concepts used for \SATs in a simple manner, which nevertheless allows computations on \DAG-structured models.

We noted that our \DAT semantics rules out some interleavings in the execution of \SAND gates, e.g.\ $(b,a,c)$ for ${\SAND\big(a,\AND(b,c)\big)}$.
To allow such sequences it is necessary to use formulae---rather than individual \BAS---as nodes of an ordering graph.
For the \DAT above, this would yield the edge $a\before(b\land c)$, which allows $(b,a,c)$ because the formulae in that sequence are satisfied in the order ``first $a$, then $b\land c$.''

Interestingly, such formula-based ordering graphs preserve the coherence of our semantics, because \Cref{prop:coherent_DATs} does not depend on the objects represented by the nodes:
it just requires that traversing edges on the ordering graph represents valid execution orders of the children of \SAND gates.
Therefore, such ordering graphs are a promising research direction.

Further lines for future work also include:
developing efficient algorithms to compute metrics
on \DAG-structured dynamic \ATs;
extending our \AT syntax to include \emph{sequential-\OR gates}
\cite{KRS15,HKKS16};
extending our metrics to consider attacker profiles; and
combining tree and \DAG structures in a clever way,
e.g.\ computing values linearly for the tree components,
and plugging these into the rest of the (\DAG) structure.

\subsection*{\bfseries Related work}
\label{sec:conclu:related_work}


\begin{table*}
	\centering
	
	\vspace{1ex}
	\caption{%
	  Algorithms for metrics on different \AT classes%
	  ~~\textcolor{black!70}{(replica of \Cref{tab:all_algos_intro})}
	}
	\label{tab:all_algos_conclu}
	\vspace{2ex}
	\parbox{.73\linewidth}{
		\def\bfacro#1{\acronym{\bfseries{#1}}}%
		\bfacro{bu}:      bottom-up on the \AT structure.
		\bfacro{aph}:     acyclic phase-type (time distribution).
		\bfacro{bdd}:     binary decision diagram.
		\bfacro{mtbdd}:   multi-terminal \acronym{bdd}.
		$\pmb{\pazocal{C}}$-\bfacro{bu}:
		                  repeated \acronym{bu}, identifying clones.
		\bfacro{dpll}:    \acronym{dppl sat}-solving in the \AT formula.
		\bfacro{pta}:     priced time automata (semantics).
		\bfacro{i/o-imc}: input/output interactive Markov chains (semantics).
	}
	\vspace{-2ex}
\end{table*}

Surveys on attack trees are \cite{KPS14,WAFP19}:
the latter covers \AT analysis via formal methods, from which we are close to quantitative model checking---cf.\ simulation studies such as \cite{DMCR06,WAN18}.
Concrete case studies have been reported in \cite{FGK+16}.


\Cref{tab:all_algos_conclu}
condenses literature references on quantitative analyses of \ATs, classified by the structure and (dynamic) gates of the \ATs where they are applicable.
For each metric and \AT class, in this \namecref{tab:all_algos_conclu} we cite the earliest relevant contributions that include concrete computation procedures.

Works \cite{BLP+06,JW08} are among the first to model and compute the cost and probability of attacks: their algorithms have \EXPTIME complexity regardless of the \AT structure.
In \cite{KRS15,KSR+18} an attack is moreover characterised by the time it takes.
This allows for richer Pareto analyses but introduces one clock per \BAS in the Priced Time Automata semantics: algorithms have thus \EXPTIME \& \PSPACE complexity \cite{AD94,BLR05}.
The current work improves these bounds via specialised procedures tailored for the specific \AT class, e.g.\ \Cref{alg:bottom_up_SAT,alg:bottom_up_DAT} resp.\ for tree-structured \SATs and \DATs have \LINTIME complexity.

Indeed, all algorithms specialised on tree-structured \ATs implement a bottom-up traversal on its syntactic structure: we denote these \acronym{bu} in \Cref{tab:all_algos_conclu}.
Pareto analyses are polynomial, where the exponent is the number of parameters being optimised.
Most works are on static \ATs, with the relevant exception of \cite{AHPS14,JKM+15} which include sequential \AND gates.

For \DAG-structured static \ATs the algorithmic spectrum is broader, owing to the NP-hardness of the problem (see \Cref{sec:SAT_DAGs:complexity}).
Such algorithms range from classical \BDD encodings for probabilities, and extensions to multi-terminal \BDDs, to logic-based semantics that exploit \acronym{dpll}, including an encoding of \SATs as generalised stochastic Petri nets.
A prominent contribution is \cite[Alg.~1]{KW18}: after computing so-called optional and necessary clones, its computations are exponential on the number of shared \BAS (only).

In contrast, the computation of security metrics for dynamic attack trees is more recent than for \SATs: here we find open problems in the literature, indicated in four cells of \Cref{tab:all_algos_conclu}.
For tree-structured \DATs and to the best of our knowledge, no work addresses directly the computation of Pareto frontiers.
For this, our \Cref{alg:bottom_up_DAT} ($\mathtt{BU_{DAT}}$) could be embedded in the static setting of \cite{AN15,HAF+09}: the gist would be to carry around pairs of values instead of only one, removing dominated solutions at each step.
As for $k$-top metric algorithms, our \Cref{alg:bottom_up_DAT} could be extended with priority lists updated during the tree traversal.

We thus propose to tackle two open problems on tree-structured \DATs, by simple combinations or extensions of other methods (from the literature or introduced in this work).
In contrast, the open problems for \DAG-structured \DATs are less easy to overcome.
To compute attack probability, \cite{AGKS15} encodes these attack trees as a variant of Markov chains.
For other metrics, \cite{KRS15} encodes the \AT as a network of \acronym{pta} and solves the resulting cost-optimal reachability problem.
As earlier stated, these very powerful and general approaches are in detriment of computational efficiency.

A recent related approach encodes dynamic fault trees as so-called sequential-\BDDs, to compute the probability of system failure \cite{YW20}.
Such safety-oriented works do not map directly to security analysis such as \AT metrics, because:
\begin{enumerate*}
\item	they can compute probability---and possibly parallel time---only;
\item	the dynamic gates are not the same than those in dynamic \ATs;
\item	the standard logical gates are interpreted differently.
\end{enumerate*}
Still, it might be feasible to adapt \cite{YW20} to compute \AT metrics, e.g.\ to compare it against the algorithms here presented.
Probably the main detriment is that sequential-\BDDs expand sequence dependencies of every pair of events, adding a combinatorial blow-up on top of the already exponential explosion incurred by \BDD representations of \DAGs.
This leads us to believe that even the \EXPTIME complexity of our \Cref{alg:bottom_up_BDD} should be more time-efficient.

\bibliographystyle{IEEEtran}
\bibliography{\jobname}

\ifthenelse{\boolean{forCSF}}%
{
}{
	\newpage\clearpage
	\appendices
\section{Proofs of results from the paper}
\label[appendix]{sec:proofs}  

\begingroup

\allowdisplaybreaks

\def\REF#1{\textit{\ref{#1})}}
\def\REFS#1#2{\textit{\ref{#1})--\ref{#2})}}
\def\hop{\\[.5ex]}
\def\by#1{&&\hspace*{-\linewidth}\text{\smaller\color{black!70}~by #1}}


\begin{replemma}{lemma:ssem}
	Consider a \SAT with nodes $a\in\BAS, v_1, v_2\in\ATnodes$,
	that has a proper tree structure. Then:
	\begin{enumerate}
	\item	$\ssem{a} = \{ \{a\}\}$;
			\label{lemma:ssem:BAS}
	\item	$\ssem{\OR(v_1,v_2)} = \ssem{v_1} \cup \ssem{v_2}$;
			\label{lemma:ssem:OR}
	\item	$\ssem{\AND(v_1,v_2)} = \{ \attack_1\cup \attack_2 \mid
				\attack_1\in\ssem{v_1} \land \attack_2\in\ssem{v_2} \}$;
			\label{lemma:ssem:AND}
	\item	In cases \REF{lemma:ssem:OR} and \REF{lemma:ssem:AND}
			the \ssem{v_i} are disjoint, and in case \REF{lemma:ssem:AND}
			moreover the $A_i$ are pairwise disjoint.
			\label{lemma:ssem:disjoint}
	\end{enumerate}
\end{replemma}
\begin{proof}	\label{lemma:ssem:proof}
	The semantics function in \Cref{def:SAT:semantics}, $\ssem{\cdot}\from\allATs\to\allSuites$, takes as argument a static attack tree $\T=(\ATnodes,\typOp,\chOp)$.
	This \namecref{lemma:ssem} presents (abridgedly) and overloaded function $\ssem{\cdot}\from\ATnodes\to\allSuites$, for which we want to prove that $\ssem{\ATroot}=\ssem{\T}$.
	We do this by structural induction in \T, considering that each case~\REFS{lemma:ssem:BAS}{lemma:ssem:AND} gives semantics to the root of the corresponding \SAT.
	But first note that case~\REF{lemma:ssem:disjoint} is straightforward given the tree-structure of \T: the \BAS descendants of the children $v_1$ and $v_2$ form a partition of the set $\BAS_{\T}$, so no \BAS that appears in \ssem{v_1} can appear in \ssem{v_2} and vice versa.
	Now we prove cases~\REFS{lemma:ssem:BAS}{lemma:ssem:AND}:
	
	\begin{enumerate}[topsep=.3ex,parsep=0pt,itemsep=.3ex]
	
	\item[\REF{lemma:ssem:BAS}]
	We have $\ATroot=a\in\BAS$, so by \Cref{def:AT:syntax}:
	\mbox{$\ATnodes=\{a\}$}, and thus $\allSuites_{\T}=\{\emptyset,\{a\}\}$.
	It follows that:
	\mbox{$\sfunT(a,\attack)=1$} iff \mbox{$\attack=\{a\}$}
	$~\therefore~ \ssem{\T}=\{\{a\}\}=\ssem{\ATroot}$.
	
	\item[\REF{lemma:ssem:OR}]
	For $i=1,2$ let \T[i] be the \SAT s.t.\ $\ATroot[{\T[i]}]=v_i$ in case~\REF{lemma:ssem:OR}, then by IH $\ssem{\T[i]}=\ssem{v_i}$.
	Due to case~\REF{lemma:ssem:disjoint} it follows that an attack $\attack\in\ssem{v_1}\cup\ssem{v_2}=\ssem{v_1}\uplus\ssem{v_2}$ iff $\sfunT[{\T[i]}](\attack)=\top$ and \attack is minimal for one of the \T[i], where $\uplus$ denotes disjoint union.
	W.l.o.g.\ let this hold for \T[1].
	Since $\type{\ATroot}=\tOR$ then by \Cref{def:SAT:sfun,def:SAT:semantics} this happens iff $\sfunT(\attack)=\top$ and \attack is minimal for \T[1], the latter by our current hypothesis.
	But in turn this happens iff \attack is also minimal for \T, because otherwise $\exists\,\attack[B]\subsetneq\attack\,.\,\sfunT(\attack[B])=\top$, and since no element of \attack can be in the \BAS descendants of $v_2$, this would mean that $\sfunT[{\T[1]}](\attack[B])=\top$, which contradicts the hypothesis that $\attack\in\ssem{v_1}$.
	Therefore, $\attack\in\ssem{v_1}\cup\ssem{v_2}$ iff $\sfunT(\attack)=\top$ and \attack is minimal for \T, i.e.\ iff $\attack\in\ssem{\T}$.

	\item[\REF{lemma:ssem:AND}]
	Let \T[1],\T[2] be as before and consider an attack $\attack\in\ssem{\T}$.
	By case~\REF{lemma:ssem:disjoint} we can partition $\attack=\attack_1\uplus\attack_2$.
	Since $\type{\ATroot}=\tAND$ and by \Cref{def:SAT:sfun,def:SAT:semantics}, $\attack\in\ssem{\T}$ iff $\sfunT[{\T[i]}](\attack_i)$ for both children and \attack is minimal for \T.
	Now, if $\attack_1$ were not minimal for \T[1], then $\exists\,\attack[B]_1\subsetneq\attack_1\,.\,\sfunT[{\T[1]}](\attack[B]_1)=\top$.
	But then $\attack'=\attack[B]_1\uplus\attack_2\subsetneq\attack$ is a sucessful attack for \T, which contradicts the hypothesis that $\attack\in\ssem{\T}$.
	By an analogous argument with $\attack_2$ we get that $\attack\in\ssem{\T}$ iff $\sfunT[{\T[i]}](\attack_i)$ and $\attack_i$ is minimal for \T[i], i.e.\ iff $\attack\in\{\attack_1\cup\attack_2\mid\attack_i\in\ssem{v_i}\}$.

	\end{enumerate}
\end{proof}


\begin{replemma}{lemma:dsem}
	Consider a well-formed \DAT with nodes ${a\in\BAS}$, ${v_1,v_2\in\ATnodes}$,
	that has a proper tree structure. Then:
	\begin{enumerate}
	\item	$\dsem{a} = \{ \poset[\{a\}][\emptyset] \}$;%
			\label{lemma:dsem:BAS}
	\item	$\dsem{\OR(v_1,v_2)} = \dsem{v_1} \cup \dsem{v_2}$;%
			\label{lemma:dsem:OR}
	\item	$\dsem{\AND(v_1,v_2)} = \mbox{$\left\{
				\poset[\attack_1{\cup}\attack_2][{\prec_1}{\cup}{\prec_2}]
				\,|\, \poset[\attack_i][\prec_i]\in\dsem{v_i}
			\right\}$}$;%
			\label{lemma:dsem:AND}
	\item	$\dsem{\SAND(v_1,v_2)} = \{
				\poset[\attack_1\cup\attack_2 \,]%
				      [~{\prec_1}\cup{\prec_2}\cup{\attack_1\times\attack_2}]
				\cdots$ \\
				\hspace*{\stretch{1}} $\cdots
					\mid \mbox{$\poset[\attack_i][\prec_i]\in\dsem{v_i}$}
			\}$;%
			\label{lemma:dsem:SAND}
	\item	In cases \REFS{lemma:dsem:OR}{lemma:dsem:SAND} above the
			\dsem{v_i} are disjoint, and in cases \REF{lemma:dsem:AND} and
			\REF{lemma:dsem:SAND} moreover the $A_i$ are pairwise disjoint.
			\label{lemma:dsem:disjoint}
	\end{enumerate}
\end{replemma}
\begin{proof}	\label{lemma:dsem:proof}
	Cases~\REF{lemma:dsem:BAS}, \REF{lemma:dsem:OR}, and \REF{lemma:dsem:disjoint}, are a trivial extension---to attacks as posets---of the same cases from \Cref{lemma:ssem}.
	To prove cases~\REF{lemma:dsem:AND} and \REF{lemma:dsem:SAND} we proceed as in \Cref{lemma:ssem} by structural induction on \T, the (dynamic) \AT whose root is the node under consideration.
	Therefore, for ${i=1,2}$ we have \DATs \T[i] s.t.\ $\ATroot[{\T[i]}]=v_i$ and (by IH) $\dsem{\T[i]}=\dsem{v_i}$.

	\begin{enumerate}[topsep=.3ex,parsep=0pt,itemsep=.3ex]
	
	\item[\REF{lemma:dsem:AND}]
	We must show that ${\poset\in\dsem{\AND(v_1,v_2)}}$ iff there exist $\poset[\attack_i][\prec_i]\in\dsem{v_i}$ s.t.\ ${A=A_1\cup A_2}$ and ${{\prec}={\prec_1}\cup{\prec_2}}$.
	First note that \Cref{def:DAT:attack} of (minimal) attack for a dynamic \AT \T uses the structure function of its corresponding static \AT $\T'$.
	This means that the sets $\attack_i\subset\BAS_{\T[i]}$ of the poset-attacks $\poset[\attack_i][\prec_i]\in\dsem{\T[i]}$ are under the same restrictions than in the static case.
	Therefore we can reduce to case~\REF{lemma:ssem:AND} from \Cref{lemma:ssem} to ensure that there must indeed exist minimal set-attacks $A_i\in\ssem{\T[i]'}$ s.t.\ $A=A_1\cup A_2\in\ssem{\T'}$.
	Now consider the partial orders ${\prec_i}$\,, which by \Cref{def:wellformed,def:DAT:attack} are minimal relations that respect the order imposed by the \SAND gates of \T[i].
	Since $\T=\AND(v_1,v_2)$ has the same restrictions on the order of \BAS than \T[1] and \T[2] together, and since these two \DATs have disjoint \BAS sets, we get that ${{\prec}={\prec_1}\cup{\prec_2}}$.

	\item[\REF{lemma:dsem:SAND}]
	This proof follows the same idea than for case \REF{lemma:dsem:AND} above with one addendum: here $\T=\SAND(v_1,v_2)$ has more restrictions on the order of \BAS than \T[1] and \T[2].
	First, since \T[1] and \T[2] are subtrees of \T, then all their restriction on \BAS elements apply to \T, so ${\prec_1}\cup{\prec_2}\subseteq{\prec}$.
	Second, on top of these restrictions and by \Cref{def:wellformed,def:DAT:attack}, the ordering graph $\ograph{}=(\BAS,\before)$ of \T imposes that every \BAS from \T[1] comes before any \BAS from \T[2].
	Morevoer, the partial order of a poset-attack $\poset\in\dsem{\T}$ is a restriction of the edge relation $\before$ to \attack.
	Since $\attack=\attack_1\cup\attack_2$ for some $\attack_i\in\ssem{\T[i]'}$, then these extra restrictions are precisely of the form $a_1\prec a_2$ s.t.\ $a_i\in\attack_i$, i.e.\ $\attack_1\times\attack_2$.
	In other words, ${{\prec}={\prec_1}\cup{\prec_2}\cup{\attack_1\times\attack_2}}$.
	\end{enumerate}

\end{proof}


\begin{reptheorem}{theo:bottom_up_SAT}
	Let \T be a static \AT with tree structure,
	$\attrOp$ an attribution on \Vdom,
	and $\domain=(\Vdom,\operOR,\operAND)$ a semiring attribute domain.
	Then $\metr{\T} = \BUSAT(\T,\ATroot,\attrOp,\domain)$.
\end{reptheorem}
\begin{proof}	\label{theo:bottom_up_SAT:proof}
	Let $\BU(v)=\BUSAT(\T,v,\attrOp,\domain)$ for any node $v$ of \T,
	then we want to prove that $\metr{\T}=\BU(\ATroot)$.
	W.l.o.g.\ we consider binary trees (to make use of \Cref{lemma:ssem}),
	and proceed by structural induction in \T.
	Since this is a static \AT there are only three possibilities for \ATroot:
	\begin{enumerate}[topsep=0pt,itemsep=.2ex,leftmargin=1.3em]

	\item	$\ATroot=a\in\BAS$:
	\\ then by \Cref{lemma:ssem}.\ref{lemma:ssem:BAS}) 
	$\ssem{\T}=\ssem{\ATroot}=\{\{a\}\}$, so by \Cref{def:SAT:metric}:
	\begin{align*}
	\metr{\T}
		&= \bigoperOR_{\attack\in\ssem{\T}}
		   \bigoperAND_{b\in\attack} \attr{b}\\
		&= \attr{a}\\
		&= \BU(\ATroot).
	\end{align*}

	\item	$\ATroot=\OR(v_1,v_2)$:
	\begin{align*}
	\metr{\T}
		&= \bigoperOR_{\attack\in\ssem{\T}} \metrA{\attack}
		\hop
		\by{\Cref{lemma:ssem}.\ref{lemma:ssem:OR})}
		\hop
		&= \bigoperOR_{\attack\in\ssem{v_1}\cup\ssem{v_2}} \metrA{\attack}
		\hop
		\by{assoc.\ and conm.\ of $\operOR$}
		\hop
		&= \bigoperOR_{\attack\in\ssem{v_1}}
		   \bigoperOR_{\attack\in\ssem{v_2}\setminus\ssem{v_1}}
		   \metrA{\attack}
		\hop
		\by{\Cref{lemma:ssem}.\ref{lemma:ssem:disjoint})}
		\hop
		&= \bigoperOR_{\attack\in\ssem{v_1}}
		   \bigoperOR_{\attack\in\ssem{v_2}}
		   \metrA{\attack}
		\hop
		\by{distr.\ of $\operAND$ over $\operOR$}
		\hop
		&= \left( \bigoperOR_{\attack\in\ssem{v_1}} \metrA{\attack} \right)
		   \operOR
		   \left( \bigoperOR_{\attack\in\ssem{v_2}} \metrA{\attack} \right)
		\hop
		\by{\Cref{def:SAT:metric} and IH}
		\hop
		&= \BU(v_1) \operOR \BU(v_2)
		\hop
		\by{\Cref{alg:bottom_up_SAT}}
		\hop
		&= \BU(\ATroot).
	\end{align*}

	\item	$\ATroot=\AND(v_1,v_2)$:
	\begin{align*}
	\metr{\T}
		&= \bigoperOR_{\attack\in\ssem{\T}} \metr{\attack}
		\hop
		\by{Hyp. and \Cref{def:SAT:metric}}
		\hop
		&= \bigoperOR_{\attack\in\ssem{\AND(v_1,v_2)}}~
		   \bigoperAND_{a\in\attack} \attr{a}
		\hop
		\by{\nameCrefs{lemma:ssem} \labelcref{lemma:ssem}.\ref{lemma:ssem:AND})
		    and \labelcref{lemma:ssem}.\ref{lemma:ssem:disjoint})}
		\hop
		&= \bigoperOR_{\attack_1\uplus\attack_2\in\ssem{v_1}\uplus\ssem{v_2}}~
		   \bigoperAND_{a\in\attack_1\uplus\attack_2} \attr{a}
		\hop
		\by{asoc.\ and conm.\ of $\operOR$,
		    and $\ssem{v_1}\cap\ssem{v_2}=\emptyset$}
		\hop
		&= \bigoperOR_{\attack_1\in\ssem{v_1}} \left(
		   \bigoperOR_{\attack_2\in\ssem{v_2}} \Bigg(
		       \bigoperAND_{a\in\attack_1\uplus\attack_2} \attr{a}
		   \Bigg)\right)
		\hop
		\by{asoc.\ and conm.\ of $\operAND$,
		    and $\attack_1\cap\attack_2=\emptyset$}
		\hop
		&= \bigoperOR_{\attack_1\in\ssem{v_1}} \mathlarger{\Bigg(}
		   \bigoperOR_{\attack_2\in\ssem{v_2}} ~~\cdots\\
		   &\qquad\cdots
		   \Bigg( 
		       \bigg(\bigoperAND_{a_1\in\attack_1}\attr{a_1}\bigg)
		       \operAND
		       \bigg(\bigoperAND_{a_2\in\attack_2}\attr{a_2}\bigg)
		   \Bigg)\mathlarger{\Bigg)}
		\hop
		\by{distrib.\ of $\operAND$ over $\operOR$}
		\hop
		&= \bigoperOR_{\attack_1\in\ssem{v_1}} \mathlarger{\Bigg(}
		   \bigg(\bigoperAND_{a_1\in\attack_1}\attr{a_1}\bigg)
		   \operAND ~~\cdots\\
		   &\qquad\cdots
		   \Bigg(
		       \bigoperOR_{\attack_2\in\ssem{v_2}} 
		       \bigoperAND_{a_2\in\attack_2}\attr{a_2}
		   \Bigg)\mathlarger{\Bigg)}
		\hop
		\by{distrib.\ of $\operAND$ over $\operOR$}
		\hop
		&= \left( \bigoperOR_{\attack_1\in\ssem{v_1}}
		          \bigoperAND_{a_1\in\attack_1}\attr{a_1} \right)
		   \operAND ~~\cdots\\
		   &\qquad\cdots
		   \left( \bigoperOR_{\attack_2\in\ssem{v_2}}
		          \bigoperAND_{a_2\in\attack_2}\attr{a_2} \right)
		\hop
		\by{\Cref{def:SAT:metric} and IH}
		\hop
		&= \BU(v_1)\operAND\BU(v_2)
		\hop
		\by{\Cref{alg:bottom_up_SAT}}
		\hop
		&= \BU(\ATroot),
	\end{align*}
	where $\uplus$ denotes disjoint union.
	\end{enumerate}

\end{proof}


\begin{reptheorem}{theo:NP_hard}
	The problem of computing the smallest minimal attack
	of a DAG-structured static AT is NP-hard.
\end{reptheorem}
\begin{proof}	\label{theo:NP_hard:proof}
	\def\minSAT{\textit{minSAT}\xspace}
	\def\CNFSAT{\textit{CNFSAT}\xspace}
	\def\posvar#1{\raisebox{.95ex}{\rotatebox{180}{\ensuremath{#1}}}}
	\def\posphi{\ensuremath{\widehat{\varphi}}\xspace}
	\def\minmap{\operatorname{\mathit{g}}}
	A static attack tree \T is equivalent to a logical formula whose atoms are the elements of \BAS: denote this formula \LT and note that none of its atoms appears negated.
	The problem of finding the smallest minimal attack in \T can thus be reformulated as finding the smallest $A\subseteq\BAS$ whose elements must evaluate to $\top$ to satisfy \LT.
	Denote this problem \minSAT.
	We now reduce \CNFSAT, the satisfiability problem for arbitrary logic formulae in conjunctive normal form, to solving \minSAT.
	Let $\varphi=\bigwedge_{i=1}^n c_i=\bigwedge_{i=1}^n\bigvee_{j=1}^{N_i}\ell_i^j$ be one such arbitrary formula with atoms $\mathrm{A}=\{a_k\}_{k=1}^m$.
	Define $\pos(\ell)\doteq\posvar{a}$ if the literal $\ell=\neg a$, and $\pos(\ell)\doteq a$ otherwise, where \posvar{a} is a fresh non-negated (``positive'') atom.
	Now let $\hat{c}_i\doteq\bigvee_{j=1}^{N_i}\pos(\ell_i^j)$ and $\hat{a}\doteq(a\lor\posvar{a})$, and consider the formula $\posphi=\bigwedge_{i=1}^n\hat{c}_i\,\bigwedge_{k=1}^m\hat{a}_k$.
	Since no atom of \posphi is negated, by \minSAT we can find some $\minmap\from\mathrm{A}\cup\posvar{\mathrm{A}}\to\BB$ that satisfies \posphi, mapping to $\top$ a \emph{minimum amount of atoms} from $\mathrm{A}\cup\posvar{\mathrm{A}}$.
	Now consider the second part of the conjunction in \posphi: satisfying $\bigwedge_{k=1}^m\hat{a}_k$ requires, minimally, mapping $m$ atoms to $\top$, e.g.\ all the $\mathrm{A}$, or all the $\posvar{\mathrm{A}}$.
	But then:
	\begin{itemize}[topsep=0pt,parsep=.5ex,leftmargin=1em]
	\item	if $\minmap$ maps \emph{exactly $m$ atoms} to $\top$, then for every $\hat{a}_k$ it mapped either $a_k$ or $\posvar{a}_k$ to $\top$\ \;$\therefore$\;\ $\varphi$ is satisfiable;
	\item	else $\exists~\hat{c}_i,a_k$ s.t.\ $a_k\in\hat{c}_i$ and $\posvar{a}_k\in\hat{c}_i$ and $\minmap(a_k)=\minmap(\posvar{a}_k)=\top$\ \;$\therefore$\;\ $\varphi$ is unsatisfiable.
	\end{itemize}

\end{proof}

\begingroup

\def\root{\ensuremath{\BDDroot[{\bddT}]}\xspace}
\def\IC#1{\ensuremath{\mathrm{IC}_{#1}}\xspace}

\begin{reptheorem}{theo:bottom_up_BDD}
	Let \T be a static \AT, \bddT its \BDD encoding over \poset[\BAS][{<}],
	$\attrOp$ an attribution on \Vdom,
	and $\ndomain=(\Vdom,\operOR,\operAND,\ntOR,\ntAND)$ an attribute domain
	with neutral elements resp.\ for $\operOR$ and $\operAND$.
	Then $\metr{\T} = \BUBDD(\bddT,\root,\attrOp,\ndomain)$.
\end{reptheorem}
\begin{proof}	\label{theo:bottom_up_BDD:proof}
	We use induction on the number of leaves of the \AT, which is the number of nodes of \bddT and therefore its number of levels.
	In particular, the inductive step exploits the fact that the leaf $\oldtop$ in \bddT (labelled with $\top$) cannot be the $\low$ child of a nonterminal node $w\in\BDDnodesN$.
	Intuitively, this is because by visiting $\low(w)$ in the traversal of \bddT, \Cref{alg:bottom_up_BDD} considers \emph{the exclusion} of $\BDDlab(w)\in\BAS$ from the current attack under consideration.
	Static \ATs are coherent, so excluding a \BAS cannot be the reason that makes an attack succeed.
	Therefore, taking the $\low$ child of $w$ cannot lead to $\oldtop$.
	An analogous reasoning entails that the leaf $\oldbot$ cannot be a $\high$ child.
	\begin{itemize}[label=\textbullet,topsep=0pt,itemsep=.2ex,leftmargin=1em]
	\item	In the base case
	$\BAS=\{a\}$, so the \BDD has a single nonterminal node labelled with $a$, whose $\low$ child is $\oldbot$ and $\high$ child is $\oldtop$.
	Then by \Cref{lemma:ssem}.\ref{lemma:ssem:BAS}) we get that $\ssem{\T}=\ssem{\ATroot}=\{\{a\}\}$, so by \Cref{def:SAT:metric}:
$\metr{\T}=\attr{a}=\ntOR\operOR(\ntAND\operAND\attr{a})=\BUBDD(\bddT,\root,\attrOp,\ndomain)$.
	\item	Assume by IH
	that the statement holds for any \SAT with $\BAS=\{a_1,\ldots,a_n\}$ and consider \T with basic attack steps ${\BAS\uplus\{a_0\}}$.
	W.l.o.g.\ let $a_0<a_i$ for $0<i\leqslant n$.
	Then ${\BDDlab(\root)=a_0}$ and since \root is nonterminal we get $\BUBDD(\bddT,\root,\attrOp,\ndomain)=\IC{\overbar{a_0}}\operOR(\IC{a_0}\operAND\attr{a_0})$.
	\IC{\overbar{a_0}} stands for the inductive case whose (sub-)\,\BDD has $\low(\root)\in\BDDnodes$ as root node: call this \BDD \bddT[L].
	Proceed analogously for the inductive case \IC{a_0} with \BDD \bddT[H].
	Then both \BDDs \bddT[L] and \bddT[H] represents static \ATs with $\BAS=\{a_i\}_{i=1}^n$.
	Call these \T[L] and \T[H], then by IH: $\metr{\T[L]}={\BUBDD(\bddT,\low(\root),\attrOp,\ndomain)}={\IC{\overbar{a_0}}}$, and analogously for \T[H] and $\high(\root)$.
	Recall now that \bddT represents the structure function of \T by exploiting its Shannon expansion.
	That means that \bddT[L] represents a sub-structure function of \T
in the case that $a_0$ is mapped to $\bot$, i.e.\ when this basic attack step does not occur.
	Therefore, \metr{\T[L]} is the metric that considers all attacks (from \T) that do not require $a_0$.
	Oppositely, \metr{\T[H]} is the metric for all attacks whose success requires that $a_0$ takes place.
	Since any attack in \ssem{\T} either contains $a_0$ or not, the metric \metr{\T} could be computed as the disjunction of \metr{\T[L]} and \metr{\T[H]}.
	However, $a_0\not\in\BAS_{\T[H]}$, so \metr{\T[H]} is not counting \attr{a_0}.
	But the attacks in \T represented by \T[H] require $a_0$ to succeed, so we must include---via the conjunction operator---its attribute to the metric.
	This step requires the distributivity of $\operAND$ over $\operOR$ (and over itself): this allows to embed or ``push-in'' the attribute \attr{a_0} via $\operAND$ in the recursive computation \IC{a_0}.
	In sum: $\metr{\T}={\metr{\T[L]}\operOR(\metr{\T[H]}\operAND\attr{a_0})}={\IC{\overbar{a_0}}\operOR(\IC{a_0}\operAND\attr{a_0})}={\BUBDD(\bddT,\root,\attrOp,\ndomain)}$.
	\end{itemize}

\end{proof}
\endgroup  

\begin{reptheorem}{theo:bottom_up_DAT}
	Let \T be a well-formed tree-structured \DAT,
	$\attrOp$ an attribution on \Vdom,
	and $\domain=(\Vdom,\operOR,\operAND,\operSAND)$ a
	semiring dynamic attribute domain.
	Then $\metr{\T} = \BUDAT(\T,\ATroot,\attrOp,\domain)$.
\end{reptheorem}
\begin{proof}	\label{theo:bottom_up_DAT:proof}
	Let $\BU(v)=\BUDAT(\T,v,\attrOp,\domain)$ for any node $v$ of \T.
	W.l.o.g.\ we consider binary trees (to use \Cref{lemma:dsem}),
	and proceed by structural induction. 
	Cases $\type{\ATroot}\in\{\tBAS,\tOR,\tAND\}$ are a trivial extension---to
	attacks as posets via \Cref{lemma:dsem}---of the same cases from
	\Cref{theo:bottom_up_SAT}.
	Let us then prove the case ${\BU(\ATroot)} = {\BU(\SAND(v_1,v_2))} = {\BU(v_1)\operSAND\BU(v_2)}$, where by IH each call expands to ${\bigoperOR_{\poset[\attack_i][\prec_i]\in\dsem{v_i}} \bigoperAND_{\ccomp_i\in\Hasse[\attack_i][\prec_i]} \bigoperSAND_{a_i\in\ccomp_i} \attr{a_i}}$:
	\begin{flalign*}
	\BU(\ATroot)
		&= \left(
		   \bigoperOR_{\poset[\attack_1][\prec_1]\in\dsem{v_1}}
		   \bigoperAND_{\ccomp_1\in\Hasse[\attack_1][\prec_1]}
		   \bigoperSAND_{a_1\in\ccomp_1} \attr{a_1}
		   \right)
		   \operSAND \cdots\\
		   &\cdots
		   \left(
		   \bigoperOR_{\poset[\attack_2][\prec_2]\in\dsem{v_2}}
		   \bigoperAND_{\ccomp_2\in\Hasse[\attack_2][\prec_2]}
		   \bigoperSAND_{a_2\in\ccomp_2} \attr{a_2}
		   \right)
		\hop
		\by{distr.\ of $\operSAND$ on $\operAND$ and on $\operOR$}
		\hop
		&= \bigoperOR_{\poset[\attack_1][\prec_1]\in\dsem{v_1}}
		   \left(
		   \bigoperAND_{\ccomp_1\in\Hasse[\attack_1][\prec_1]}
		   \Bigg(
		   \bigoperSAND_{a_1\in\ccomp_1} \attr{a_1}
		   \right.
		   \operSAND \cdots\\
		   &\cdots
		   \left.
		   \bigg(
		   \bigoperOR_{\poset[\attack_2][\prec_2]\in\dsem{v_2}}
		   \bigoperAND_{\ccomp_2\in\Hasse[\attack_2][\prec_2]}
		   \bigoperSAND_{a_2\in\ccomp_2} \attr{a_2}
		   \bigg)
		   \Bigg)
		   \right)
		\hop
		\by{distr.\ of $\operSAND$ on $\operAND$ and on $\operOR$,
		    and distr.\ of $\operAND$ on $\operOR$}
		\hop
		&= \bigoperOR_{\poset[\attack_1][\prec_1]\in\dsem{v_1}}
		   \bigoperOR_{\poset[\attack_2][\prec_2]\in\dsem{v_2}}
		   \mathlarger{\Bigg(}
		   \bigoperAND_{\ccomp_1\in\Hasse[\attack_1][\prec_1]}
		   \bigoperAND_{\ccomp_2\in\Hasse[\attack_2][\prec_2]}
		   \cdots\\
		   &\cdots
		   \Bigg(
		   \bigoperSAND_{a_1\in\ccomp_1} \attr{a_1}
		   ~\operSAND~
		   \bigoperSAND_{a_2\in\ccomp_2} \attr{a_2}
		   \Bigg)
		   \mathlarger{\Bigg)}
		\hop
		\by{asoc.\ of $\operAND$ and $\operOR$, and
		    \Cref{lemma:dsem}.\ref{lemma:dsem:disjoint}}
		\hop
		&= \bigoperOR_{\poset\in\{\poset[\attack_1\uplus\attack_2]
		                                [\prec_1\uplus\prec_2]
		               \mid \poset[\attack_i][\prec_i]\in\dsem{v_i}\}~}
		   \bigoperAND_{~\ccomp_1,\ccomp_2\in\Hasse}
		   \!\!\!\cdots\\
		   &\cdots
		   \left(
		   \bigoperSAND_{a_1\in\ccomp_1} \attr{a_1}
		   ~\operSAND~
		   \bigoperSAND_{a_2\in\ccomp_2} \attr{a_2}
		   \right).
		\hop
	\end{flalign*}
	
	\vspace{-3ex}\noindent%
	Now consider the term $\big(\bigoperSAND_{a_1\in\ccomp_1} \attr{a_1}\big)
	\operSAND \big(\bigoperSAND_{a_2\in\ccomp_2} \attr{a_2}\big)$.
	The small operator $\operSAND$ in the middle states that any \BAS from $v_1$ must be to the left of that $\operSAND$, and any \BAS from $v_2$ must be to its right.
	That is connecting the connected components $\ccomp_1$ from $v_1$ with the $\ccomp_2$ from $v_2$, where the elements $a_1\in\ccomp_1$ precede the $a_2\in\ccomp_2$.
	So this ranges over edges $a_1\before a_2$ where nodes $a_i$ belong to the Hasse diagram \Hasse[\attack_i][\prec_i]: it links \Hasse[\attack_1][\prec_1] and \Hasse[\attack_2][\prec_2] via $\ATroot=\SAND(v_1,v_2)$.
	In other words, for $\ATroot$ those Hasse diagrams are joint, with edges that go from every $a_1$ in attacks of $v_1$ to each $a_2$ in attacks of $v_2$.
	Therefore:
	\begin{flalign*}
	\BU(\ATroot)
		&= \bigoperOR_{\poset\in\{
			 \poset[\attack_1\uplus\attack_2]
			      [\prec_1\uplus\prec_2\uplus{\attack_1\times\attack_2}]
		     \mid \poset[\attack_i][\prec_i]\in\dsem{v_i}\}~}
		   \cdots\\
		   &\cdots
		   \bigoperAND_{~\ccomp\in\Hasse}
		   \bigoperSAND_{a\in\ccomp} \attr{a}
		\hop
		\by{\Cref{lemma:dsem}.\ref{lemma:dsem:SAND}}
		\hop
		&= \bigoperOR_{\poset\in\dsem{\SAND(v_1,v_2)}}
		   \bigoperAND_{~\ccomp\in\Hasse}
		   \bigoperSAND_{a\in\ccomp} \attr{a}
		\hop
		\by{\Cref{def:DAT:metric}}
		\hop
		&= \metr{\SAND(v_1,v_2)}.
	\end{flalign*}
	
\end{proof}


\endgroup

}

\end{document}